\newtheorem{thm}{Theorem}[section]
\newtheorem{lem}[thm]{Lemma}
\newtheorem{defn}[thm]{Definition}
\newtheorem{cor}[thm]{Corollary}
\newtheorem{prop}[thm]{Proposition}
\newtheorem{rem}[thm]{Remark}
\newtheorem{alg}[thm]{Algorithm}
\newenvironment{proof}{\noindent\textsc{Proof: }}{\hfill$\fbox{}$\par\medskip\par}
\newenvironment{aenum}{\begin{enumerate}
 
 }{\end{enumerate}}
\newcommand{\R}{{\mathbb R}}
\newcommand{\Z}{{\mathbb Z}}
\newcommand{\N}{{\mathbb N}}
\newcommand{\cF}{{\cal F}}
\def\mapright#1{\stackrel{#1}{\longrightarrow}}
\def\mapdown#1{\Big\downarrow\rlap{$\vcenter{\hbox{$\scriptstyle#1$}}$}}
\newcommand{\ma}{\texttt{m}\,} 
\newcommand{\sS}{\texttt{S}} 
\newcommand{\sA}{\texttt{A}} 
\newcommand{\sB}{\texttt{B}}
\newcommand{\sC}{\texttt{C}} 
\newcommand{\sD}{\texttt{D}} 
\newcommand{\mI}{\texttt{$\max\!$I}\,}
\title{Reducing Complexes in Multidimensional Persistent Homology Theory}
\author{Madjid Allili,
Tomasz Kaczynski, and Claudia Landi}
\begin{document}

\maketitle

\begin{abstract}
The Forman's discrete Morse theory appeared to be useful for providing filtration--preserving reductions of complexes
in the study of persistent homology. So far, the algorithms computing discrete Morse matchings have only been used for
one--dimensional filtrations. This paper is perhaps the first attempt in the direction of extending such algorithms to
multidimensional filtrations. Initial framework related to Morse matchings for the multidimensional setting is proposed,
and a matching algorithm given by King, Knudson, and Mramor is extended in this direction. The correctness of the
algorithm is proved, and its complexity analyzed. The algorithm is used for establishing a reduction of a simplicial
complex to a smaller but not necessarily optimal cellular complex. First experiments with filtrations of triangular
meshes are presented.
\end{abstract}


\section{Introduction}
\label{sec:intro}

The {\em persistent homology} has been intensely developed in the last decade as a tool for studying problems of two kinds.
One is the topological analysis of discrete data, e.g. {\em point-cloud data}, where the chosen framework is a discrete linear
filtration of a simplicial complexes. The first contributions in this direction given by Edelsbrunner et al. in \cite{EdLeZo02},
and later by Carlsson and Zomorodian \cite{CaZo05} opened a new direction in research. The other one is the study of shape
similarity by {\em shape-from-function} methods, where the framework is the filtration of a topological triangulable space by
the values of a continuous function called {\em measuring function}. The $0$--dimensional persistent homology case, where the
topological invariants are based on the number of connected components, where known under the name of the {\em size function}
theory since the paper by Frosini \cite{Fr91}. The applications of persistent homology to shape similarity are studied
by~\cite{VeUr*93,CaZo*05,DiLa*12}. The two frameworks, discrete and continuous, have been extended to the multiparameter
filtration case called {\em multidimensional persistence}, where the filtration is set up with respect to a parameter space
that is no longer ordered linearly \cite{CaZo07,BiCe*08,CaDiFe10,CeDi*10}. In the continuous setting this gives rise to
{\em multidimensional measuring functions}, that is functions with values in $\R^k$. In \cite{CEFKL13} the relation between
the discrete and continuous settings is established.

In parallel, another mathematical theory which became increasingly popular in computational sciences is the Forman's
{discrete Morse theory} \cite{For98}. We will not elaborate on all possible applications of this theory in visualization,
imaging, computational geometry and other fields but just point out the one to computing persistence. The effective
computation of the persistent homology is a challenge due to a huge size of complexes built from data, for instance,
via meshing techniques. The discrete Morse theory enables algorithms reducing a given complex (simplicial, cubical, or cellular)
to a much smaller cellular complex, homotopically equivalent to the initial one, by means of {\em Morse matching}, also called
{\em Morse pairing}. An ultimate goal is often to reduce the complex to an optimal one, where all remaining cells are topologically
significant. If a reduction by Morse pairings can be performed in a filtration--preserving way, that leads to a faster persistent
homology computation. This goal motivated the contributions of King, Knudson, and Mramor \cite{KinKnuMra05}, Mischaikow and Nanda
\cite{MiNa}, Robins et al.~\cite{RobWooShe11}, and D{\l}otko and Wagner \cite{DloWag}.

Given a complex and a partial pairing of its cells, the paired cells form a discrete vector field in the language
of discrete Morse theory and can be reduced in pairs so to obtain at each step a new complex homotopically equivalent
to the previous one. The final complex consists of unpaired cells that are also called critical cells. First, we give
an algorithm that constructs a Morse matching for a given complex and we prove its correctness and analyze its complexity.
Then, we go on proving that given a multifiltration on the initial complex, the reduction process yields
a new multifiltration consisting of smaller complexes and which has the same persistence homology as the initial one. As pointed
out in~\cite{MiNa} for the one dimensional case, the complexity of computing multidimensional persistence homology of a filtration
is essentially determined by the sizes of its complexes. This motivates this approach of reducing the initial complexes for achieving
a low computational cost in the persistence homology computation. Our matching algorithm can be considered as an extension to the
multidimensional setting of the algorithms given in King et al.~\cite{KinKnuMra05} and Cerri et al.~\cite{CeFrKrLa11}.
The multidimensionality is symbolized by the function defined on the vertices of the complex. The algorithm is of iterative
and recursive nature. It considers every vertex of the complex and builds a partial matching recursively on its lower link
before extending it to the entire complex. When the dimension is fixed and the number of cofaces of every cell in the complex
is bounded above by a fixed constant, we prove that the computational complexity of the algorithm is linear in the number of
vertices of the initial complex.

So far, the algorithms for discrete Morse pairings have only been used for one--parameter filtrations. There does not yet exist
a systematic extension of the Forman's discrete Morse theory to the multiparameter case, and this goal offers challenges both on
theoretical as on computational level. This paper is the first attempt in this direction.

The paper is organized as follows. In Section~\ref{prel}, we recall definitions and some known facts about $S$-complexes,
multidimensional persistent homology, acyclic matchings, and reduction of $S$--complexes.
In Section~\ref{sec:main}, we propose initial definitions of Morse pairings for the multidimensional setting and we extend
the algorithm given by King, Knudson, and Mramor. We next prove the correctness of the algorithm. Note that we do not claim
to obtain an optimal cellular complex. The set of cells we call {\em critical} is simply the set of all unpaired cells and,
typically, this is not an optimal complex. We next establish our filtration--preserving complex reduction method.
In Section~\ref{sec:experiments}, we present our first experiments with multifiltrations of triangular meshes. These
experiments show a fair rate of reduction but not an optimal one in the sense that the remaining cells are not all
relevant in the computation of persistence homology. An improvement of our methods towards the optimality is a research in progress.

\section{Preliminaries}
\label{prel}

\subsection{$\sS$-complexes}
\label{sec:s-complexes}

We shall use the combinatorial framework of $\sS$-complexes introduced in \cite{MrBa09}. Let $R$ be a principal ideal domain
(PID) whose invertible elements we call {\em units}. Given a finite set $X$, let $R(X)$ denote the free module over $R$ generated by $X$.

Let $\sS$ be a finite set with a gradation $\sS_q$ such that $\sS_q =\emptyset$ for $q < 0$.
Then $R(\sS_q)$ is a gradation of $R(\sS)$ in the category of moduli over the
ring $R$. For every element $\sigma \in \sS$ there exists a unique number $q$ such that $\sigma \in \sS_q$. This number will
be referred to as the dimension of $\sigma$ and denoted $\dim \sigma$.

Let $\kappa : \sS \times \sS \to R$ be a function such that, if $\kappa(\sigma, \tau) \ne 0$, then $\dim \sigma = \dim \tau + 1$.

We say that $(\sS, \kappa)$ is an {\em $\sS$-complex} if $(C_*(\sS),\partial^\kappa_*)$ with
$C_q(\sS):=R(\sS_q)$ and $\partial^\kappa_q : C_q(\sS) \to C_{q-1}(\sS)$ defined on generators $\sigma\in \sS$ by
$$\partial^\kappa(\sigma) :=\sum_{\tau\in \sS}\kappa(\sigma, \tau)\tau$$
is a free chain complex with base $\sS$. The map $\kappa$ will be referred to as
the {\em coincidence index}. If $\kappa(\sigma, \tau) \ne 0$, then we say that $\tau$ is a {\em primary face} of $\sigma$
and $\sigma$ is a {\em primary coface} of $\tau$.
We say that $\tau$ is a {\em face} of $\sigma$ and $\sigma$ is a {\em coface} of $\tau$ if there is a sequence of
generators ordered by the primary face relation starting with $\tau$ and ending with $\sigma$.

By the homology of an $\sS$-complex $(\sS, \kappa)$ we mean the homology of the
chain complex $(C_*(\sS),\partial^\kappa_* )$, and we denote it by $H_*(\sS, \kappa)$ or simply by $H_*(\sS)$.

The choice of $R$ a PID is made for the sake of homology computations, and also because in the proof of Proposition \ref{incidence}
we actually use the cancellation law.\\

A special case of an $\sS$ complex is the simplicial complex.
A $q$-simplex $\sigma = [v_0,v_1, \ldots, v_q]$ in $\R^d$ is the convex hull of
$q + 1$ affinely independent points $v_0$,$v_1$, $\ldots$, $v_q$ in $\R^d$, called the vertices
of $\sigma$. The number $q$ is the dimension of the simplex. A face of $\sigma$
is a simplex whose vertices constitute a subset of $(v_0,v_1,\ldots ,v_q)$. A
simplicial complex consists of a collection $\sS$ of simplices such that every
face of a simplex in $\sS$ is in $\sS$, and the intersection of two simplices
in $\sS$ is their common face. The simplicial complex $\sS$ has a natural
gradation $({\sS}_q)$, where $\sS_q$ consists of simplices of dimension $q$. Since a
zero dimensional simplex is the singleton of its unique vertex, ${\sS}_0$ may
be identified with the collection of all vertices of all simplices in the
simplicial complex ${\sS}$.

Assume an ordering of ${\sS}_0$ is given and every simplex $\sigma$ in ${\sS}$ is coded
as $[v_0,v_1,\ldots v_q]$, where the vertices $v_0,v_1,\ldots v_q$ are listed according
to the prescribed ordering of ${\sS}_0$. By putting
$$\kappa(\sigma, \tau) :=\left\{\begin{array}{ll}
(-1)^i & \mbox{if  $\sigma= [v_0,v_1, \ldots ,v_q]$}\\
 & \mbox{and  $\tau= [v_0,v_1, \ldots,v_{i-1},v_{i+1},\ldots ,v_q]$}\\
0 & \mbox{otherwise.}\end{array}\right.$$
we obtain an $\sS$-complex whose chain complex is the classical simplicial
chain complex used in simplicial homology.

\subsection{Multidimensional Persistent Homology}
\label{sec:md-pers-hom}
Let $(\sS, \kappa)$ be an $\sS$-complex. A {\em multi-filtration} of $\sS$ is a family $\cF=\{\sS^\alpha\}_{\alpha\in \R^k}$ of
subsets of $\sS$ with the following properties:
\begin{aenum}
\item $\cF$ is nested with respect to inclusions, that is $\sS^\alpha\subseteq \sS^{\beta}$, for every $\alpha\preceq \beta$,
where $\alpha\preceq \beta$ if and only if $\alpha_i \leq \beta_i$ for all $i=1,2,\ldots, k$;
\item $\cF$ is non-increasing on faces, that is, if $\sigma \in \sS^\alpha$ and $\tau$ is a face of $\sigma$ then $\tau \in \sS^\alpha$.
\end{aenum}

Persistence is based on analyzing the homological changes occurring along the filtration as $\alpha$ varies. This analysis is carried out
by considering, for $\alpha\preceq\beta$, the homomorphism
\[
H_*(j^{(\alpha,\beta)}): H_*(\sS^{\alpha}) \to
H_*(\sS^{\beta}).
\]
induced by the inclusion map $j^{(\alpha,\beta)}:\sS^{\alpha}\hookrightarrow \sS^{\beta}$.

The image of the map $H_q(j^{(\alpha,\beta)})$ is  known as the {\em $q$'th persistent homology group} of the filtration at $(\alpha,\beta)$
and we denote it by $H_q^{\alpha,\beta}(\sS)$. It contains the homology classes of order $q$ born not later than $\alpha$ and still alive at $\beta$.

The framework described so far for general filtrations can be specialized in various directions. A case relevant for a simplicial complex
is when the filtration is induced by the values of a function defined at its vertices. Let ${\sS}$ be a simplicial complex. Given a function
$f:{\sS}_0\to \R^k$, it induces on ${\sS}$ the so-called {\em sublevel set filtration}, defined as follows:
\[
{\sS}^\alpha=  \{\sigma=[v_0,v_1,\ldots,v_q]\in {\sS} \mid f(v_i)\preceq \alpha, \ i=0,\ldots ,q\}.
\]
We will call the function $f$ a {\em measuring function}.

\subsection{Acyclic Partial Matchings}
\label{sec:acyclic-match}

Let $(\sS, \kappa)$ be an $\sS$-complex. A {\em partial matching} $(\sA,\sB,\sC,\ma)$ on $(\sS, \kappa)$ is a partition of $\sS$ into three
sets $\sA,\sB,\sC$ together with a bijective  map $\ma: \sA\to \sB$  such that, for each $\tau\in \sB$, $\kappa(\ma(\tau),\tau)$ is invertible.
Observe that, in particular, $\ma(\tau)$ is a primary coface of $\tau$.

A partial matching $(\sA,\sB,\sC,\ma)$ on $(\sS, \kappa)$ is called {\em acyclic} if there does not exist a sequence
\begin{equation}\label{loop}
\sigma_0,\tau_0,\sigma_1,\tau_1,\ldots ,\sigma_p,\tau_p,\sigma_{p+1}
\end{equation}
such that, $\sigma_{p+1}=\sigma_0$, and, for each $i=0,\ldots ,p$, $\sigma_{i+1}\ne \sigma_i$, $\tau_i=\ma(\sigma_i)$, and $\tau_i$ is a primary
coface of  $\sigma_{i+1}$.

A convenient way to reformulate the definition of an acyclic partial matching is via Hasse diagrams. The {\em Hasse diagram} of $(\sS, \kappa)$
is the directed graph whose vertices
are elements of $\sS$, and the edges are given by primary face relations and oriented from the larger element to the smaller one. Given a partial
matching $(\sA,\sB,\sC,\ma)$ on  $(\sS, \kappa)$,
we change the orientation of the edge $(\tau, \sigma)$ whenever $\tau = \ma(\sigma)$. The acyclicity condition says that the oriented graph
obtained in this way, which is also called the \emph{modified Hasse diagram} of $(\sS, \kappa)$, has no nontrivial cycles.
A directed graph with no directed cycles is called a directed acyclic graph (DAG). Thus, a partial matching
$(\sA,\sB,\sC,\ma)$ on $(\sS, \kappa)$ is {\em acyclic} if its corresponding \emph{modified Hasse diagram} is a DAG.

\subsection{Reductions}
\label{sec:reductions}

We describe here a reduction construction which was introduced in \cite{KaMrSl98} for finitely generated chain complexes, also presented in
\cite[Chapter 4]{KacMisMro04}. The construction was reused in \cite{MrBa09} for the purposes of the coreduction method and, recently, in
\cite{MiNa} for the one-dimensional filtration of $S$-complexes, which is perhaps the closest reference for the purposes of this paper.

Let  $(\sA,\sB,\sC,\ma)$ be a partial matching (not necessarily acyclic) on an $\sS$-complex $(\sS, \kappa)$. Given $\sigma\in \sA$, a new
$\sS$-complex $(\overline{\sS}, \overline{\kappa})$ is constructed by setting $\overline{\sS}=\sS\setminus \{\ma(\sigma),\sigma\}$, and
$\overline{\kappa}:\overline{\sS}\times \overline{\sS}\to R$,
\begin{equation}\label{eq:kappa-bar}
\overline{\kappa}(\eta,\xi)=\kappa(\eta,\xi)-\frac{\kappa(\eta,\sigma)\kappa(\ma(\sigma),\xi)}{\kappa(\ma(\sigma),\sigma)}.
\end{equation}
Note that $\kappa(\ma(\sigma),\sigma)$ is invertible by the definition of a partial matching. We say that $(\overline{\sS}, \overline{\kappa})$
is obtained from $(\sS,\kappa)$ by a {\em reduction} of the pair $(\ma(\sigma),\sigma)$.

A pair of linear maps $\pi: C_*(\sS)\to C_*(\overline{\sS})$ and $\iota: C_*(\overline{\sS})\to C_*(\sS)$ is defined on generators by setting
\begin{equation}\label{eq:pi}
\pi(\tau)=\left\{\begin{array}{ll}
0 & \mbox{if $\tau=\ma(\sigma)$}\\
-\sum_{\xi\in \overline{\sS}}\frac{\kappa(\ma(\sigma),\xi)}{\kappa(\ma(\sigma),\sigma)}\xi & \mbox{if $\tau=\sigma$}\\
\tau & \mbox{otherwise}
\end{array}\right.
\end{equation}
and
\begin{equation}\label{eq:iota}
\iota(\tau)=\tau-\frac{\kappa(\tau,\sigma)}{\kappa(\ma(\sigma),\sigma)}\ma(\sigma).
\end{equation}

It is well known \cite{KaMrSl98} that $C_*(\overline{\sS})$ is a well-defined chain complex, and that $\pi$ and $\iota$ are chain equivalences with the
chain homotopy $D_*: C_*(\sS)\to C_{*+1}(\sS)$ given on generators $\tau \in \sS_q$, $q\in \Z$, by
\begin{equation}\label{eq:chain-homotopy}
D_q(\tau)=\left\{\begin{array}{ll}
\frac{1}{{\kappa(\ma(\sigma),\sigma)}}\ma(\sigma) & \mbox{if $\tau=\sigma$}\\
0 & \mbox{otherwise}
\end{array}\right.
\end{equation}

As a consequence, $H_*(\sS)=H_*(\overline{\sS})$.

Let  $(\sA,\sB,\sC,\ma)$ be an acyclic partial matching  on an $\sS$-complex $(\sS, \kappa)$. Let $(\overline{\sS}, \overline{\kappa})$ be obtained
from $(\sS,\kappa)$ by reduction of the pair $(\ma(\sigma),\sigma)$, $\sigma\in \sA$.

\begin{prop}\label{incidence}
If $(\sA,\sB,\sC,\ma)$ is acyclic then, for any $\tau\in \sA\setminus \{\sigma\}$, $\overline{\kappa}(\ma(\tau),\tau)$ is invertible. Furthermore,
$\overline{\kappa}(\ma(\tau),\tau)=\kappa(\ma(\tau),\tau)$.
\end{prop}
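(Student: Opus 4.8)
The plan is to analyze the single correction term in formula~\eqref{eq:kappa-bar} applied to the pair $(\eta,\xi) = (\ma(\tau),\tau)$, namely
\[
\overline{\kappa}(\ma(\tau),\tau) = \kappa(\ma(\tau),\tau) - \frac{\kappa(\ma(\tau),\sigma)\,\kappa(\ma(\sigma),\tau)}{\kappa(\ma(\sigma),\sigma)},
\]
and to show that the fraction vanishes whenever $\tau \in \sA \setminus \{\sigma\}$. Once the correction term is zero, the equality $\overline{\kappa}(\ma(\tau),\tau) = \kappa(\ma(\tau),\tau)$ is immediate, and invertibility follows from the definition of a partial matching applied to the original complex $(\sS,\kappa)$, since $\kappa(\ma(\tau),\tau)$ is invertible by hypothesis.

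So the whole argument reduces to proving that at least one of the two numerator factors $\kappa(\ma(\tau),\sigma)$ and $\kappa(\ma(\sigma),\tau)$ is zero, i.e.\ that we cannot simultaneously have $\sigma$ a primary face of $\ma(\tau)$ and $\tau$ a primary face of $\ma(\sigma)$. Here I would argue by contradiction using acyclicity. Suppose both factors are nonzero. Then $\tau$ is a primary face of $\ma(\sigma)$ and $\sigma$ is a primary face of $\ma(\tau)$. I claim $\sigma \ne \tau$: indeed $\tau \in \sA \setminus \{\sigma\}$ by assumption. Now form the sequence
\[
\sigma_0 = \sigma,\ \tau_0 = \ma(\sigma),\ \sigma_1 = \tau,\ \tau_1 = \ma(\tau),\ \sigma_2 = \sigma.
\]
Check the conditions of~\eqref{loop} with $p = 1$: $\sigma_{2} = \sigma_0$; for $i=0$ we have $\tau_0 = \ma(\sigma_0)$, $\sigma_1 = \tau \ne \sigma = \sigma_0$, and $\tau_0 = \ma(\sigma)$ is a primary coface of $\sigma_1 = \tau$ (equivalently $\tau$ is a primary face of $\ma(\sigma)$, which is exactly $\kappa(\ma(\sigma),\tau) \ne 0$); for $i = 1$ we have $\tau_1 = \ma(\sigma_1)$, $\sigma_2 = \sigma \ne \tau = \sigma_1$, and $\tau_1 = \ma(\tau)$ is a primary coface of $\sigma_2 = \sigma$ (which is $\kappa(\ma(\tau),\sigma) \ne 0$). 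This is precisely a forbidden cyclic sequence of the form~\eqref{loop}, contradicting acyclicity of $(\sA,\sB,\sC,\ma)$.

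The main obstacle is simply the bookkeeping in matching the abstract forbidden-loop condition~\eqref{loop} to the concrete pair of nonzero coincidence indices; one must be careful that the inequalities $\sigma_{i+1} \ne \sigma_i$ are genuinely guaranteed (the first from $\tau \ne \sigma$, the second from $\sigma \ne \tau$), and that "primary coface" in~\eqref{loop} is read in the same direction as "$\kappa(\cdot,\cdot) \ne 0$" in the definition of an $\sS$-complex. No ring-theoretic subtlety is needed for this proposition beyond knowing $\kappa(\ma(\sigma),\sigma)$ is a unit so the fraction is well defined; the cancellation-law remark in the text is presumably aimed at a later statement, not at this one. I would keep the proof to essentially the contradiction argument above plus the one-line conclusion about invertibility.
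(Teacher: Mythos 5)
Your argument is correct and follows essentially the same route as the paper: both show the correction term in \eqref{eq:kappa-bar} vanishes because having $\kappa(\ma(\tau),\sigma)\ne 0$ and $\kappa(\ma(\sigma),\tau)\ne 0$ simultaneously would produce the forbidden cycle $\sigma,\ma(\sigma),\tau,\ma(\tau),\sigma$, contradicting acyclicity. Your version is in fact slightly more careful than the paper's in explicitly checking the conditions of \eqref{loop} (including $\sigma_{i+1}\ne\sigma_i$), and your side remark is fair --- the case split on whether the product is zero requires no cancellation law, only that $\kappa(\ma(\sigma),\sigma)$ is a unit.
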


\begin{proof}
By definition,
\[
\overline{\kappa}(\ma(\tau),\tau)=\kappa(\ma(\tau),\tau)-\frac{\kappa(\ma(\tau),\sigma)\kappa(\ma(\sigma),\tau)}{\kappa(\ma(\sigma),\sigma)}.
\]
If $\kappa(\ma(\tau),\sigma)\kappa(\ma(\sigma),\tau)=0$, then $\overline{\kappa}(\ma(\tau),\tau)=\kappa(\ma(\tau),\tau)$ is invertible. Otherwise,
$\kappa(\ma(\tau),\sigma)\ne 0$ and $\kappa(\ma(\sigma),\tau)\ne 0$. Hence $\sigma$ is a primary face of $\ma(\tau)$ and $\tau$ is a primary face of
$\ma(\sigma)$. On the other hand, by definition of $\ma$,  $\sigma$ is a primary face of $\ma(\sigma)$ and $\tau$ is a primary face of $\ma(\tau)$.
But this contradicts the assumption that the partial matching is acyclic. In conclusion, necessarily $\overline{\kappa}(\ma(\tau),\tau)=\kappa(\ma(\tau),\tau)$.
\end{proof}

\begin{cor}\label{cor:reduced-matching}
Let  $(\sA,\sB,\sC,\ma)$ be an acyclic partial matching on $(\sS,\kappa)$. Given a fixed $\sigma\in \sA$,
define $\overline{\sA}=\sA\setminus \{\sigma\}$,
$\overline{\sB}=\sB\setminus \{\ma(\sigma)\}$, $\overline{\ma}=\ma_{|\overline{\sA}}$, and $\overline{\sC}=\sC$.
Then $(\overline{\sC},\overline{\ma}:\overline{\sA}\to \overline{\sB})$ is an acyclic partial matching on $(\overline{\sS},\overline{\kappa})$.
\end{cor}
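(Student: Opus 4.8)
The plan is to check, one at a time, the three conditions in the definition of an acyclic partial matching for the quadruple $(\overline{\sA},\overline{\sB},\overline{\sC},\overline{\ma})$ on $(\overline{\sS},\overline{\kappa})$: that $\overline{\sA},\overline{\sB},\overline{\sC}$ partition $\overline{\sS}$; that $\overline{\ma}$ is a bijection $\overline{\sA}\to\overline{\sB}$ with $\overline{\kappa}(\overline{\ma}(\tau),\tau)$ invertible for every $\tau\in\overline{\sA}$; and that the corresponding modified Hasse diagram is acyclic. The first two are essentially bookkeeping. Since $\sA,\sB,\sC$ partition $\sS$ with $\sigma\in\sA$ and $\ma(\sigma)\in\sB$, deleting these two cells shows at once that $\overline{\sA},\overline{\sB},\overline{\sC}$ partition $\overline{\sS}=\sS\setminus\{\ma(\sigma),\sigma\}$. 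Since $\ma:\sA\to\sB$ is a bijection carrying $\sigma$ to $\ma(\sigma)$, its restriction $\overline{\ma}$ is a bijection of $\overline{\sA}$ onto $\overline{\sB}$; moreover, for $\tau\in\overline{\sA}$ both $\tau$ and $\ma(\tau)$ belong to $\overline{\sS}$, so $\overline{\kappa}(\ma(\tau),\tau)$ is defined, and it is invertible (indeed equal to $\kappa(\ma(\tau),\tau)$) by Proposition~\ref{incidence}.

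The real content is acyclicity, and the one subtlety is that the reduced coincidence index $\overline{\kappa}$ may have strictly more nonzero entries than $\kappa$. By formula \eqref{eq:kappa-bar}, if $\overline{\kappa}(\eta,\xi)\neq 0$ then either $\kappa(\eta,\xi)\neq 0$, or else $\kappa(\eta,\sigma)\neq 0$ and $\kappa(\ma(\sigma),\xi)\neq 0$. In the first case $\xi$ is already a primary face of $\eta$ in $(\sS,\kappa)$. In the second case $\sigma$ is a primary face of $\eta$ and $\xi$ is a primary face of $\ma(\sigma)$, so the modified Hasse diagram of $(\sS,\kappa)$ contains the directed path $\eta\to\sigma\to\ma(\sigma)\to\xi$, in which $\sigma\to\ma(\sigma)$ is the reversed matching edge. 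Hence every edge of the modified Hasse diagram of $(\overline{\sS},\overline{\kappa})$ either already occurs in that of $(\sS,\kappa)$, or can be expanded into such a length-three path.

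Now suppose, for contradiction, that $(\overline{\sA},\overline{\sB},\overline{\sC},\overline{\ma})$ is not acyclic: there is a sequence as in \eqref{loop}, say $\eta_0,\xi_0,\eta_1,\dots,\eta_p,\xi_p,\eta_{p+1}=\eta_0$, with $\eta_i\in\overline{\sA}$, $\xi_i=\overline{\ma}(\eta_i)=\ma(\eta_i)$, $\overline{\kappa}(\xi_i,\eta_{i+1})\neq 0$, and $\eta_{i+1}\neq\eta_i$. For each index $i$ with $\kappa(\xi_i,\eta_{i+1})=0$, replace the piece $\xi_i,\eta_{i+1}$ by $\xi_i,\sigma,\ma(\sigma),\eta_{i+1}$, using the expansion above. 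The resulting closed walk in $(\sS,\kappa)$ again alternates between cells of $\sA$ and their $\ma$-images in $\sB$, each $\sB$-cell being a primary coface of the following $\sA$-cell; and consecutive $\sA$-cells along it are distinct, because consecutive $\eta$'s are distinct by hypothesis and each inserted $\sigma$ differs from its two neighbours $\eta_i,\eta_{i+1}$, which lie in $\overline{\sA}=\sA\setminus\{\sigma\}$. Thus this closed walk is a sequence of the form \eqref{loop} for $(\sS,\kappa)$, contradicting the acyclicity of $(\sA,\sB,\sC,\ma)$, and the corollary follows. The step I expect to require the most care is this last one, namely tracking the alternation of the walk and the condition $\sigma_{i+1}\neq\sigma_i$ through the insertions; the rest is immediate from the definitions and from Proposition~\ref{incidence}.
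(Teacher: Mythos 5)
Your proof is correct, and its overall outline --- partition and bijectivity by bookkeeping, invertibility of $\overline{\kappa}(\ma(\tau),\tau)$ from Proposition~\ref{incidence}, and acyclicity by lifting a putative cycle in $(\overline{\sS},\overline{\kappa})$ back to one in $(\sS,\kappa)$ --- is the same as the paper's. The difference lies in the acyclicity step, where the paper's entire justification is the sentence that a cycle in the Hasse diagram of $(\overline{\sS},\overline{\kappa})$ is also a cycle in $(\sS,\kappa)$. Taken literally that claim is false, for exactly the reason you isolate: formula~(\ref{eq:kappa-bar}) can produce $\overline{\kappa}(\eta,\xi)\ne 0$ while $\kappa(\eta,\xi)=0$, so the modified Hasse diagram of the reduced complex may contain edges that do not exist in the original one. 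Your expansion of each such new edge $\ma(\eta_i)\to\eta_{i+1}$ into the directed path $\ma(\eta_i)\to\sigma\to\ma(\sigma)\to\eta_{i+1}$ --- legitimate because a nonzero correction term forces $\kappa(\ma(\eta_i),\sigma)\ne 0$ and $\kappa(\ma(\sigma),\eta_{i+1})\ne 0$, and $\sigma\to\ma(\sigma)$ is a reversed matching edge --- together with the check that the expanded closed walk still satisfies the consecutive-distinctness requirement of~(\ref{loop}) because $\eta_i,\eta_{i+1}\in\sA\setminus\{\sigma\}$, is precisely what is needed to make the paper's one-liner rigorous. (Note that possible repeated, non-consecutive occurrences of $\sigma$ in the expanded walk are harmless, since~(\ref{loop}) only forbids equality of consecutive $\sA$-cells.) So your argument is not merely a valid alternative: it supplies a complete justification of the one step the paper leaves implicit, and nothing in it needs repair.
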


\begin{proof} The bijectivity of $\overline{\ma}$ is obvious by definition. The invertibility of $\overline{\kappa}(\ma(\tau),\tau)$ has been
just proved in Proposition~\ref{incidence}. A cycle in the Hasse diagram of $(\overline{\sS},\overline{\kappa})$ is also a cycle in $(\sS,\kappa)$,
hence the acyclicity condition follows.
\end{proof}

Finally, we define the induced filtration on $\overline{\sS}$.

\begin{defn}\label{def:filtration-reduced}
Let ${\cF}=\{\sS^\alpha\}_{\alpha\in\R^k}$ be a multifiltration on $\sS$. Then $\overline{\cF}=\{\overline{\sS}^\alpha\}_{\alpha\in\R^k}$ is the multifiltration on
$\overline{\sS}$ defined by setting, for each $\tau \in \overline{\sS}$,
\[
\tau \in \overline{\sS}^\alpha \iff \tau\in \sS^\alpha.
\]
\end{defn}

\section{Main Results}
\label{sec:main}

\subsection{Matching Algorithm}
\label{sec:match-alg}
In this section we consider a finite simplicial complex $\sS$ together with a function $f : \sS_0 \to \R^k$ inducing the sublevel
set filtration ${\cF} = \{{\sS}^\alpha\}_{\alpha\in\R^k}$.
Given two values $\alpha =(\alpha_i),\beta=(\beta_i)\in\R^k$ we set $\alpha \prec \beta$ (resp. $\alpha \preceq \beta$) if and only
if $\alpha_i < \beta_i$ (resp. $\alpha_i \leq \beta_i$) for every $i$ with $1\le i\le k$. Moreover we write
$\alpha \precneqq \beta$ whenever $\alpha \preceq \beta$ and $\alpha \ne \beta$.

\subsubsection{Indexing Map for Vertices}
By definition, an indexing map on the vertices of the complex $\sS$ is any one-to-one map $I:{\sS}_0 \to \N$. Our objective is
to build an indexing map $I$ such that, for each
$v,w\in {\sS}_0$ with $v\ne w$, $f(v)\precneqq  f(w)$ implies $I(v)<I(w)$. For this purpose, we will use
topological sorting of the vertices in ${\sS}_0$.

We recall that a topological sorting of a directed graph is a linear ordering of its vertices such that for every directed edge
$(u, v)$ from vertex $u$ to vertex $v$, $u$ precedes $v$ in the ordering. This ordering is possible if and only if the graph
has no directed cycles, that is, if it is a DAG. A simple well known algorithm (see~\cite{Wikipedia14,Kahn62}) for this task consists
of successively finding vertices of the DAG that have no incoming edges and placing them in a list for the final sorting. Note that
at least one such vertex must exist in a DAG, otherwise, the graph must have at least one directed cycle. Let \texttt{L} denote the list
that will contain the sorted vertices of ${\sS}_0$ and \texttt{I} the list of vertices or nodes in the DAG with no incoming edges.
The algorithm consists of two nested loops as summarized below.

\begin{alg}{\em [Topological sorting] \label{alg:sorting}\\
\textbf{while} there are vertices remaining in I \textbf{do}\\
\hspace*{1cm}\textbf{remove} a vertex u from I\\
\hspace*{1cm}\textbf{add} u to L\\
\hspace*{1cm}\textbf{for} each vertex v with an edge e from u to v \textbf{do}\\
\hspace*{2cm}\textbf{remove} edge e from the DAG\\
\hspace*{2cm}\textbf{if} v has no other incoming edges \textbf{then}\\
\hspace*{3cm}\textbf{insert} v into I\\
\noindent{\bf End}
}
\end{alg}

When the graph is a DAG, there exists at least one solution for the sorting problem, which is not necessarily unique.
We can easily see that the algorithm visits potentially every node and every edge of the DAG, therefore its running time
is linear in the number of nodes plus the number of edges in the DAG.

\begin{lem}
\label{lem:index}
There exists an injective function $I:{\sS}_0 \to \N$ such that, for each $v,w\in  {\sS}_0$ with $v\ne w$, $f(v)\precneqq  f(w)$ implies $I(v)<I(w)$.
\end{lem}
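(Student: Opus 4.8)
The plan is to build the desired indexing map $I$ by applying topological sorting (Algorithm~\ref{alg:sorting}) to an auxiliary directed graph on the vertex set $\sS_0$, where the edges encode the strict order induced by $f$. First I would define a directed graph $G$ with vertex set $\sS_0$ and an edge from $v$ to $w$ precisely when $f(v)\precneqq f(w)$. The heart of the argument is to verify that $G$ is a DAG: suppose there were a directed cycle $v_0, v_1, \ldots, v_p, v_0$. Then $f(v_0)\precneqq f(v_1)\precneqq \cdots \precneqq f(v_p)\precneqq f(v_0)$, and since $\precneqq$ is transitive, this gives $f(v_0)\precneqq f(v_0)$, which is impossible because $\precneqq$ is irreflexive (it requires $\alpha \ne \beta$). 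Hence $G$ has no directed cycles.

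Once $G$ is known to be a DAG, I would invoke Algorithm~\ref{alg:sorting} to obtain a topological sorting of its vertices, that is, a linear ordering $L$ of $\sS_0$ such that whenever there is an edge from $v$ to $w$ in $G$, $v$ precedes $w$ in $L$. I would then define $I:\sS_0\to\N$ by letting $I(v)$ be the position (say, the $1$-indexed rank) of $v$ in the list $L$. This map is injective because a linear ordering assigns distinct positions to distinct elements. Moreover, if $v\ne w$ and $f(v)\precneqq f(w)$, then by construction $G$ contains the edge $(v,w)$, so $v$ precedes $w$ in $L$, and therefore $I(v)<I(w)$, which is exactly the required property.

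I do not expect any genuine obstacle here; the statement is essentially a packaging of the topological-sorting fact already recalled before the lemma. The only point that deserves care is the verification that $\precneqq$ is a strict partial order on $\R^k$ — in particular its irreflexivity and transitivity — since this is what guarantees acyclicity of $G$ and hence the applicability of topological sorting. Transitivity follows from transitivity of $\le$ on each coordinate together with the observation that if $\alpha\preceq\beta$, $\beta\preceq\gamma$, and either $\alpha\ne\beta$ or $\beta\ne\gamma$, then $\alpha\ne\gamma$ (otherwise $\alpha=\beta=\gamma$). With these elementary facts in place, the conclusion of the lemma is immediate, and I would keep the written proof correspondingly short.
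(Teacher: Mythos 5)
Your proposal is correct and follows essentially the same route as the paper: build the directed graph on $\sS_0$ with edges given by $\precneqq$, observe it is a DAG because a cycle would force $f(v)\precneqq f(v)$, and read off $I$ from a topological sort. Your added remark spelling out the transitivity and irreflexivity of $\precneqq$ is a minor (and welcome) elaboration, not a different argument.
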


\begin{proof}
Let us denote by $N$ the cardinality of $\sS_0$. The set ${\sS}_0$ can be represented in a directed graph where each vertex is a node,
and a directed edge is drawn between two vertices $u, w \in {\sS}_0$ if and only if $f(v)\precneqq  f(w)$. It is easily seen that we
actually obtain a directed acyclic graph (DAG), since a directed cycle in ${\sS}_0$ leads to the relation $f(u) \precneqq f(u)$
for some vertex $u \in {\sS}_0$, which is a contradiction. The topological sorting algorithm outlined above will allow to sort and store the vertices
in $\sS_0$ in an array $A$ of size $N$, with indexes that can be chosen from 1 to $N$. It follows that the map $I:{\sS}_0 \to \N$ that
associates to every vertex its index in the array $A$ is an injective map on ${\sS}_0$. Moreover, and due to topological sorting, $I$ satisfies the constraint
that for $v,w\in  {\sS}_0$ with $v\ne w$, $f(v)\precneqq  f(w)$ implies $I(v)<I(w)$.
\end{proof}

\medskip

Given a vertex $v$ of $\sS$ and a simplex $\sigma\in \sS$ with vertices affinely independent on $v$, we denote by $v*\sigma$ the {\em join} of $v$
and $\sigma$ which is, in our geometric setting, the convex hull of $\{v\}\cup \sigma$. We further denote by ${\sS}'(v)$ the lower link of
$v$ which is defined by the following formula
\begin{equation}\label{lower-link}
{\sS}'(v)=\{\tau\in {\sS}\mid v*\tau\in {\sS} \wedge \, \mbox{$\forall$ vertex $w\in \tau$,  $f(w)\precneqq f(v)$}\}.
\end{equation}

\begin{alg}{\em [Matching]
\label{alg:matching}

\noindent{\bf Input:} A finite simplicial complex $\sS$ with a function $f : {\sS}_0 \to \R^k$ and an indexing $I:{\sS}_0 \to \N$ on its vertices.

\noindent{\bf Output:} Three lists $\sA,\sB,\sC$ of simplices of $\sS$, and a function $\ma:\sA\to \sB$.\\

\noindent
\textbf{function} \textsf{Partition} (complex $\sS$, function $f$, indexing map $I$)\\
\noindent{\bf Begin}
\begin{enumerate}
\item Initially, set $\sA, \sB, \sC=\emptyset$.
\item For each $v\in {\sS}_0$,
\begin{enumerate}
\item  Compute ${\sS}'(v)$, the lower link of $v$.
\item If ${\sS}'(v)$ is empty, then add $v$ to $\sC$.  Else
\begin{enumerate}
\item add $v$ to $\sA$.
\item let $f':{\sS}'_0(v)\to \R^k$ be the restriction of $f$ and $I':{\sS}'_0(v)\to \N$ be the restriction of $I$.
\item Call \textsf{Partition} (recursively) with input arguments ${\sS}'(v)$, $f'$, and $I'$, and get the output $\sA', \sB', \sC',\ma'$.
\item Set $\sD'=\{w\in \sC_0' \mid \mbox{$f(w)$ is minimal in $\sC_0'$ w.r.t. $\precneqq$}\}$.
\item Set $w_0$ as the vertex with smallest index $I'$ in $D'$.
\item Add $[v,w_0]$ to $\sB$ and define $\ma(v)=[v,w_0]$.
\item For each $\sigma \in \sC'\setminus \{w_0\}$, add $v*\sigma$ to $\sC$.
\item For each $\sigma \in \sA'$, add $v*\sigma$ to $\sA$, add $v*\ma'(\sigma)$ to $\sB$, and define $\ma(v*\sigma)=v*\ma'(\sigma)$.
\end{enumerate}
\end{enumerate}
\item endfor.
\item For each $\sigma\in {\sS}\setminus (\sA\cup \sB\cup \sC)$, add $\sigma$ to $\sC$.
\item \textbf{return} $\sA$, $\sB$, $\sC$, $\ma$.
\end{enumerate}
\noindent{\bf End}
}
\end{alg}

\begin{lem}\label{lem:partition}
$\sA, \sB, \sC$ is a partition of ${\sS}$ and $\ma$ is a bijective function from $\sA$ to $\sB$.
\end{lem}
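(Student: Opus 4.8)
The plan is to argue by induction on the number of vertices $N = \card \sS_0$, since the algorithm is recursive and the recursive calls are made on lower links, which are simplicial complexes with strictly fewer vertices. The base case $N=0$ is trivial: the main loop does not execute, step~4 puts nothing into $\sC$ (there is nothing to put), and $\sA=\sB=\sC=\emptyset$ is the empty partition with the empty bijection. For the inductive step, I would first establish that every simplex $\sigma \in \sS$ is placed into exactly one of $\sA,\sB,\sC$, and then check separately that $\ma$ is well-defined and bijective.

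For the partition claim, the key observation is that every simplex $\sigma\in\sS$ has a unique vertex of smallest index under $I$, call it $v=v(\sigma)$; write $\sigma = v * \tau$ where $\tau$ is the face of $\sigma$ spanned by the remaining vertices (with $\tau=\emptyset$ allowed, in which case $v*\emptyset$ is just the vertex $v$). I would show that $\sigma$ can \emph{only} be touched during the iteration of step~2 for this particular vertex $v$: indeed, the only simplices added to $\sA,\sB,\sC$ while processing a vertex $u$ are $u$ itself, an edge $[u,w_0]$, and joins $u*\rho$ with $\rho$ a simplex of $\sS'(u)$ — and in every such join the vertex $u$ has strictly smaller $I$-value than all vertices of $\rho$, by definition~\eqref{lower-link} of the lower link together with Lemma~\ref{lem:index}; hence $u$ is the minimal-index vertex of the simplex being placed. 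So $\sigma$ is never placed while processing any vertex other than $v(\sigma)$. It remains to check that, while processing $v=v(\sigma)$, the simplex $\sigma=v*\tau$ is placed exactly once. If $\sS'(v)=\emptyset$ then $\tau$ must be empty (else $v*\tau\in\sS$ would force $\tau\in\sS'(v)$), so $\sigma=v$ and it is added to $\sC$ once in step~2(b); and step~4 will not re-add it. If $\sS'(v)\neq\emptyset$, then $\sigma=v$ is added to $\sA$ in 2(b)i; and for $\tau\neq\emptyset$ we have $\tau\in\sS'(v)$, so by the inductive hypothesis applied to the recursive output $(\sA',\sB',\sC',\ma')$ on $\sS'(v)$, the simplex $\tau$ lies in exactly one of $\sA',\sB',\sC'$, and correspondingly $v*\tau$ is added exactly once — to $\sA$ in 2(b)viii, or to $\sB$ in 2(b)vi/viii, or to $\sC$ in 2(b)vii — \emph{except} when $\tau=w_0\in\sC'$, in which case $v*w_0$ is added to $\sB$ in step~2(b)vi and is \emph{not} added to $\sC$ by 2(b)vii (which explicitly excludes $w_0$). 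One must also note these cases are mutually exclusive, that $v*w_0$ (from 2(b)vi) is distinct from all joins $v*\ma'(\sigma)$ with $\sigma\in\sA'$ (since $w_0\in\sC'$ and $\sC'$ is disjoint from $\sB'=\ma'(\sA')$), and that $[v,w_0]$ is not $v$ itself. Consequently step~4's catch-all adds nothing, and every $\sigma$ lands in exactly one list.

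For the bijectivity of $\ma$, I would again induct: $\ma$ is defined on $\sA$ by two clauses, $\ma(v)=[v,w_0]$ for the vertices $v$ with nonempty lower link (2(b)vi), and $\ma(v*\sigma)=v*\ma'(\sigma)$ for $\sigma\in\sA'$ (2(b)viii). By the analysis above, $\sA$ is the disjoint union over such $v$ of $\{v\}\cup\{v*\sigma : \sigma\in\sA'\}$, and $\sB$ is the disjoint union of $\{[v,w_0]\}\cup\{v*\ma'(\sigma):\sigma\in\sA'\}$; within each block, $\sigma\mapsto v*\sigma$ is injective and $\ma'$ is a bijection $\sA'\to\sB'$ by the inductive hypothesis, so $\ma$ restricts to a bijection on each block; distinct vertices $v$ give disjoint blocks on both sides (minimal-index vertex argument again). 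Hence $\ma:\sA\to\sB$ is a bijection. The main obstacle I anticipate is the bookkeeping in the step~2(b) case analysis — specifically, being careful that the single element $w_0\in\sC'$ is handled consistently (removed from the $\sC$-contributions in 2(b)vii but supplying the $\sB$-element $[v,w_0]$ in 2(b)vi), so that counts match up and no simplex is double-counted or dropped; once the "minimal-index vertex" invariant is pinned down, the rest is routine.
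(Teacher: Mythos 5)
Your overall strategy (induction on the number of vertices, exploiting that the recursive calls are on lower links) is a legitimate alternative to the paper's induction on the dimension of simplices, and your block decomposition of $\sA$ and $\sB$ is essentially sound. However, there is a genuine gap in the partition part of your argument. You claim that every simplex $\sigma\in\sS$ with distinguished vertex $v=v(\sigma)$ and $\sigma=v*\tau$, $\tau\neq\emptyset$, satisfies $\tau\in\sS'(v)$, and you conclude that ``step~4's catch-all adds nothing.'' Both claims are false in the multidimensional setting that this paper is about. Membership in the lower link~(\ref{lower-link}) requires $f(w)\precneqq f(v)$ for every vertex $w$ of $\tau$, and $\precneqq$ is only a \emph{partial} order on $\R^k$ for $k\ge 2$: two vertices $v,w$ with, say, $f(v)=(1,0)$ and $f(w)=(0,1)$ are incomparable, so the edge $[v,w]$ lies in neither lower link and is never produced by any join in step~2. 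Such simplices are exactly what instruction~4 exists to catch, and they are the reason the reduction is far from optimal (cf.\ the experiments, where up to $90\%$ of the $2$-cells survive into $\sC$). The same error appears in your treatment of the case $\sS'(v)=\emptyset$: $v*\tau\in\sS$ does \emph{not} force $\tau\in\sS'(v)$. Your argument implicitly assumes the one-parameter case, where $\precneqq$ is (essentially) total.

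A secondary but consistent error: the direction of your index inequality is reversed. By~(\ref{lower-link}) every vertex $w$ of $\tau\in\sS'(v)$ satisfies $f(w)\precneqq f(v)$, hence $I(w)<I(v)$ by Lemma~\ref{lem:index}; so the vertex being processed is the \emph{maximal}-index vertex of every join it creates, not the minimal one (this is precisely the invariant the paper encodes in the map $\mI$ and Lemma~\ref{lem:nondecreasing}). The repair is straightforward: replace ``minimal'' by ``maximal'' throughout, and weaken your coverage claim to ``each simplex is placed \emph{at most} once during step~2, during the iteration of its unique maximal-index vertex (if at all), and every simplex not placed in step~2 is placed exactly once into $\sC$ by step~4.'' With those corrections your disjointness and bijectivity arguments go through and give a proof genuinely different in structure from the paper's, which instead fixes the sets and inducts on the dimension of a hypothetical element of $\sA\cap\sB$ (resp.\ of a collision of $\ma$), using the observation that two distinct vertices cannot each lie in the other's lower link.
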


\begin{proof} $\sA \cup \sB \cup \sC= {\sS}$ by instruction 4. By construction (instruction viii), the map $\ma$ is onto.
We show, by induction on the dimension of simplices in $\sA$ and $\sB$, that $\sA \cap \sB=\emptyset$ and that $\ma$ is injective.
The proof of the equalities $\sA \cap \sC = \emptyset = \sB \cap \sC$ goes by similar arguments and we leave it to the reader.
By instructions (b) and (i) vertices cannot belong to $\sB$. Therefore the first claim is true for simplices of dimension 0.
Moreover, the function $\ma$ restricted to vertices of $\sA$ is necessarily bijective. Indeed, if the edge $[v,w_0]$ is assigned
to $v \in \sA$ (instruction vi), that is $\ma(v)=[v,w_0]$, it cannot be assigned again to $w_0$, because this would require that
$v \in \sS'(w_0)$. Then $f(v) \precneqq f(w_0)$ and $f(w_0) \precneqq f(v)$ implying $v=w_0$, a contradiction.

Let us now assume that the claim is true for simplices of dimension less than $n$. Let $\tau$ be a simplex of dimension $n$ in
$\sA \cap \sB$. By instruction (viii) and since $\tau \in \sA$, there exists $\sigma \in \sA'$ (where $\sA'=\sA'(v)$) such that
$\tau=v*\sigma$ and $\ma(\tau) = v*\ma' (\sigma)$. Since $\tau \in \sB$, there exits $\beta \in \sA$ such that $\tau = \ma (\beta)$.
Since $\dim \beta > 0$, there must exist a vertex $w$ and a simplex $\gamma \in \sA'$ such that $\beta = w * \gamma$ and
$\tau = w * \ma'(\gamma)$, with $\ma'(\gamma) \in \sB'$. The vertices $v$ and $w$ must be equal, otherwise they must belong
to the lower link of each other which would be a contradiction. It follows that $\sigma = \ma' (\gamma) \in \sA' \cap \sB'$
(where  $\sB'=\sB'(v)$) which violates the induction hypothesis.

We have proved that $\sA \cap \sB=\emptyset$ and we pass to the injectivity of $\ma$. Let $\tau_1, \tau_2 \in \sA$ be simplices
of dimension $n$ such that $\ma (\tau_1) = \ma (\tau_2)$. There must exist vertices $v_1, v_2$ and simplices
$\sigma_1, \sigma_2 \in \sA'$ such that $\tau_1 = v_1 * \sigma_1$, $\tau_2 = v_2 * \sigma_2$ and
\[
\ma (\tau_1) = v_1 * \ma'(\sigma_1) = \ma (\tau_2) =  v_2 * \ma'(\sigma_2).
\]
From what precedes, we can see that the vertices $v_1, v_2$ must be equal, otherwise they must belong to the lower link of each
other. It follows that $\ma'(\sigma_1) = \ma'(\sigma_2)$ and, by the induction hypothesis, we must have $\sigma_1 = \sigma_2$
and therefore $\tau_1 = \tau_2$, which completes the proof.
\end{proof}

We define the map $\mI: \sS \to \R$ on simplices as follows
\[
\mI (\sigma) = \max_{v \, \, \mbox{\footnotesize vertex of}\, \, \sigma} I(v).
\]

\begin{lem}\label{lem:nondecreasing}
\begin{enumerate}
\item[]
\item[(a)] For every $\sigma < \tau$, $\mI (\sigma) \leq \mI (\tau)$.
\item[(b)] For every $\sigma \in \sA$, $\mI (\sigma) = \mI (\ma (\sigma)).$
\end{enumerate}
\end{lem}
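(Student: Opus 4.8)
The plan is to obtain part (a) straight from the definition of $\mI$, and to prove part (b) by induction on $\dim\sigma$, following the recursive structure of Algorithm~\ref{alg:matching}.

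For (a): if $\sigma<\tau$ then $\sigma$ is a face of $\tau$, so the vertex set of $\sigma$ is contained in the vertex set of $\tau$; hence the maximum of $I$ over the vertices of $\sigma$ does not exceed its maximum over the vertices of $\tau$, i.e. $\mI(\sigma)\le\mI(\tau)$. This step uses nothing about the algorithm.

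For (b): I argue by induction on $n=\dim\sigma$, proving the statement for the output of every invocation of \textsf{Partition} at once. Base case $n=0$: then $\sigma$ is a vertex $v\in\sA$, so ${\sS}'(v)\ne\emptyset$ and instructions (iv)--(vi) produce $\ma(v)=[v,w_0]$ with $w_0\in\sC_0'$. Since the recursive call partitions ${\sS}'(v)$ (Lemma~\ref{lem:partition}), $w_0\in\sC_0'\subseteq{\sS}'_0(v)$, so by~(\ref{lower-link}) $f(w_0)\precneqq f(v)$, and Lemma~\ref{lem:index} gives $I(w_0)<I(v)$. Therefore $\mI(\ma(v))=\max\{I(v),I(w_0)\}=I(v)=\mI(v)$. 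Inductive step $n\ge 1$: let $\sigma\in\sA$ with $\dim\sigma=n$. Instruction (i) adds only vertices to $\sA$ and instruction 4 adds only to $\sC$, so $\sigma$ was inserted by instruction (viii): there is a vertex $v$ and a simplex $\sigma'\in\sA'$ in the output of the recursive call \textsf{Partition}$({\sS}'(v),f',I')$ with $\sigma=v*\sigma'$, $\ma(\sigma)=v*\ma'(\sigma')$, and $\dim\sigma'=n-1$. All vertices of $\sigma'$ and of $\ma'(\sigma')$ lie in ${\sS}'_0(v)$ and differ from $v$ (affine independence in the join), so $\mI(v*\rho)=\max\{I(v),\mI(\rho)\}$ for $\rho\in\{\sigma',\ma'(\sigma')\}$, and $\mI$ computed for the restricted indexing $I'$ coincides with $\mI$ on these simplices. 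By the induction hypothesis applied to the recursive call (legitimate since $\dim\sigma'<n$), $\mI(\sigma')=\mI(\ma'(\sigma'))$, hence
\[
\mI(\sigma)=\mI(v*\sigma')=\max\{I(v),\mI(\sigma')\}=\max\{I(v),\mI(\ma'(\sigma'))\}=\mI(v*\ma'(\sigma'))=\mI(\ma(\sigma)).
\]

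I do not anticipate a real obstacle; the only delicate points are organizational. One must phrase the induction over all invocations of \textsf{Partition}, so that the hypothesis may be invoked on the output $(\sA',\ma')$ of the recursive call, and one must observe that passing from $I$ to its restriction $I'$ leaves $\mI$ unchanged on simplices of the lower link, so that the cone identity $\mI(v*\rho)=\max\{I(v),\mI(\rho)\}$ applies uniformly to $\sigma'$ and to $\ma'(\sigma')$. The only genuine ingredient is the strict inequality $I(w_0)<I(v)$ in the base case, which is exactly what Lemma~\ref{lem:index} provides once we know $w_0\in{\sS}'_0(v)$.
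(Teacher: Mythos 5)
Your proof is correct and follows essentially the same route as the paper's: part (a) is immediate from the definition of $\mI$, and part (b) rests on the decomposition $\sigma = v*\sigma'$, $\ma(\sigma) = v*\ma'(\sigma')$ coming from instruction (viii), together with Lemma~\ref{lem:index} applied to the lower link. The only difference is that for $\dim\sigma\ge 1$ the paper does not need an induction hypothesis: since every vertex of $\sigma'$ and of $\ma'(\sigma')$ lies in ${\sS}'_0(v)$, Lemma~\ref{lem:index} already forces $\mI(\sigma')<I(v)$ and $\mI(\ma'(\sigma'))<I(v)$, so both $\mI(\sigma)$ and $\mI(\ma(\sigma))$ equal $I(v)$ outright --- your induction on dimension is sound but dispensable.
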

\begin{proof}
(a) is trivial from the definition of $\mI$.
(b) If $\sigma$ is a vertex $v$, then $\ma (v) =[v,w]$ for some $w \in \sS'(v)$.
By Lemma~\ref{lem:index}, $I(w) < I(v)$ and hence $\mI(v) = \mI (\ma (v))$.
Let $\sigma \in \sA$ be a simplex of dimension $n \geq 1$. There exist a vertex $v$ and
a simplex $\sigma' \in \sA' \subset \sS'(v)$ such that
$\sigma = v* \sigma'$ and $\ma (\sigma) = v * \ma' (\sigma')$. Since $\sigma'$ and $\ma' (\sigma')$ are simplices of
the lower link of $v$, it follows that both $\mI (\sigma')$ and $\mI (\ma'(\sigma'))$ are smaller than  $I (v)$. Thus
$$\mI (\sigma) = \mI(v*\sigma')=\mI(v*\ma(\sigma'))=\mI (\ma(\sigma)) = I(v).$$
\end{proof}

\begin{thm}\label{th:alg}
Algorithm~\ref{alg:matching} produces a partial matching $(\sA,\sB,\sC,\ma)$ that is acyclic.
Moreover, if $\sigma\in \sS^\alpha$ then $\ma(\sigma)\in \sS^\alpha$.
\end{thm}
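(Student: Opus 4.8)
The statement asserts four things, which I would handle in turn: that $(\sA,\sB,\sC,\ma)$ is a partition of $\sS$ with $\ma\colon\sA\to\sB$ a bijection (this is exactly Lemma~\ref{lem:partition}); that $\kappa(\ma(\sigma),\sigma)$ is a unit for each $\sigma\in\sA$; that the matching is acyclic; and that $\sigma\in\sS^\alpha$ forces $\ma(\sigma)\in\sS^\alpha$. The second point follows from the structural description that the algorithm makes transparent: every $\sigma\in\sA$ is either a vertex $v$ with $\ma(v)=[v,w_0]$, $w_0\in\sS'(v)$, or has the form $\sigma=v*\sigma'$ with $\sigma'\in\sA'(v)$ and $\ma(\sigma)=v*\ma'(\sigma')$, $\ma'(\sigma')\in\sB'(v)$; here the ``splitting vertex'' $v$ is unique and is the vertex of $\sigma$ of largest index, since the remaining vertices lie in the lower link of $v$ and hence have smaller index by Lemma~\ref{lem:index}. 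An easy induction on $\dim\sigma$ with this description shows that $\ma(\sigma)$ is $\sigma$ together with exactly one extra vertex, so $\sigma$ is a primary face of $\ma(\sigma)$, and, $\sS$ being simplicial, $\kappa(\ma(\sigma),\sigma)=\pm 1$.

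For acyclicity I would argue by induction on $|\sS_0|$, the inductive hypothesis being that the algorithm returns an acyclic matching on every simplicial complex with strictly fewer vertices --- in particular on each lower link $\sS'(v)$. Suppose a loop $\sigma_0,\tau_0,\dots,\sigma_p,\tau_p,\sigma_{p+1}=\sigma_0$ as in \eqref{loop} existed. By Lemma~\ref{lem:nondecreasing}(b), $\mI(\tau_i)=\mI(\sigma_i)$, and since $\sigma_{i+1}$ is a face of $\tau_i$, Lemma~\ref{lem:nondecreasing}(a) gives $\mI(\sigma_{i+1})\le\mI(\tau_i)=\mI(\sigma_i)$; going around the loop forces a common value $m=\mI(\sigma_i)=\mI(\tau_i)$. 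With $v_m:=I^{-1}(m)$, the identity $\mI(\sigma)=\max_v I(v)$ then says that $v_m$ is the largest-index vertex of every $\sigma_i$ and every $\tau_i$, so by the structural description above each $\sigma_i$ is produced at the recursion step for $v_m$. No $\sigma_i$ can be a vertex: if $\sigma_i=v_m$ then $\tau_i=[v_m,w_0]$, whose only primary face containing $v_m$ is $[v_m]=\sigma_i$, contradicting $\sigma_{i+1}\ne\sigma_i$ (note $\mI(\sigma_{i+1})=m$ forces $v_m\in\sigma_{i+1}$). Hence $\sigma_i=v_m*\sigma_i'$, $\sigma_i'\in\sA'(v_m)$, and $\tau_i=v_m*\ma'(\sigma_i')$.

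It remains to transport the loop into $\sS'(v_m)$. A primary face of $v_m*\rho$ that still contains $v_m$ is precisely $v_m*\mu$ with $\mu$ a primary face of $\rho$; applying this to ``$\sigma_{i+1}$ is a primary face of $\tau_i$'' and cancelling the apex $v_m$ shows $\sigma_{i+1}'$ is a primary face of $\ma'(\sigma_i')$ in $\sS'(v_m)$, while injectivity of $\rho\mapsto v_m*\rho$ gives $\sigma_{i+1}'\ne\sigma_i'$ and $\sigma_{p+1}'=\sigma_0'$. Thus $\sigma_0',\ma'(\sigma_0'),\sigma_1',\dots,\ma'(\sigma_p'),\sigma_0'$ is a forbidden loop for the matching returned on $\sS'(v_m)$, contradicting the inductive hypothesis. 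I expect this descent step to be the core of the argument: the $\mI$--monotonicity confines a hypothetical loop entirely to a single ``level'' $\mI^{-1}(m)$, and one must recognize that this level is, after removing the common apex $v_m$, a copy of the modified Hasse diagram of the lower link $\sS'(v_m)$; the only fiddly points are the face-of-a-join identity and the uniqueness of the splitting vertex.

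Finally, for filtration compatibility I would prove the stronger fact that the unique extra vertex $w^*$ of $\ma(\sigma)$ over $\sigma$ satisfies $f(w^*)\precneqq f(v)$ for some vertex $v$ of $\sigma$: if $\sigma=v$ then $w^*=w_0\in\sS'(v)$, so $f(w_0)\precneqq f(v)$ by the definition \eqref{lower-link} of the lower link; if $\sigma=v*\sigma'$ then $w^*$ is a vertex of $\ma'(\sigma')\in\sS'(v)$, so again $f(w^*)\precneqq f(v)$ with $v$ a vertex of $\sigma$. Then $\sigma\in\sS^\alpha$ means every vertex of $\sigma$ has $f$--value $\preceq\alpha$, so $f(w^*)\precneqq f(v)\preceq\alpha$; since the vertex set of $\ma(\sigma)$ is that of $\sigma$ together with $w^*$, every vertex of $\ma(\sigma)$ has $f$--value $\preceq\alpha$, i.e.\ $\ma(\sigma)\in\sS^\alpha$.
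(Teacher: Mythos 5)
Your proposal is correct and follows essentially the same route as the paper: the monotonicity of $\mI$ from Lemma~\ref{lem:nondecreasing} confines any hypothetical loop to a single apex vertex $v_m$, stripping that apex transports the loop into the modified Hasse diagram of the lower link $\sS'(v_m)$ where an inductive hypothesis rules it out, and the filtration claim is handled identically via $f(w)\precneqq f(v)\preceq\alpha$ for the vertices of $\ma'(\sigma')$. The only differences are cosmetic: you induct on the number of vertices of the complex (so the hypothesis applies directly to the algorithm's output on $\sS'(v_m)$) whereas the paper inducts on the dimension of the cells in the loop, and your explicit check that $\kappa(\ma(\sigma),\sigma)=\pm1$ is a point the paper leaves implicit.
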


\begin{proof}
The partial matching is acyclic if and only if it is a gradient vector field of a discrete Morse function.
From~\cite[Theorem 6.2]{Forman02}, this is equivalent to prove that there are no nontrivial closed directed paths in the
modified Hasse diagram of the complex $\sS$. Assume that
\begin{equation}\label{eq:directed-loop}
\ell: \sigma_0 \xrightarrow{\ma}  \tau_0 \xrightarrow{\succ} \sigma_1 \xrightarrow{\ma} \tau_1 \ldots \xrightarrow{\ma}
\tau_n \xrightarrow{\succ} \sigma_0
\end{equation}
is a directed loop in the modified Hasse diagram, where $\ma$ stands for the matching and the symbol $\succ$ for the
face relation. From Lemma~\ref{lem:nondecreasing},
we deduce that $\mI$ is nondecreasing along any directed path in the modified Hasse diagram.
It follows that $\mI$ has to be constant along any directed loop.
Thus, there must exist a unique vertex $v$ such that
\[
\mI (\sigma_0) = \mI (\tau_0) = \mI (\sigma_1) = \ldots = \mI (\tau_n) = I(v),
\]
and $v$ must belong to every $\sigma_i, \tau_i \in \ell$. We will prove by induction that this leads to a contradiction.
First, observe that if $\dim \sigma_0 = \dim \sigma_1 = \ldots = 0$, then either these vertices are equal, in which case the loop is trivial,
or they are distinct in which case $\mI = I$ (on vertices) cannot be constant since it is injective. It follows that we cannot have a directed
loop $\ell$ with cells of dimensions $0$ and $1$. Assume this claim is true up to dimensions $n-2$ and $n-1$, and suppose our directed loop
$\ell$ in~(\ref{eq:directed-loop}) is composed of cells $\sigma_i$ of dimension $n-1$ and cells $\tau_i$ of dimension $n$. We have proved that
$v$ is a vertex of each $\sigma_i, \tau_i \in \ell$, so there exist simplices $\sigma_0^\prime$, $\sigma_1^\prime$, \ldots, $\sigma_n^\prime$,
$\tau_0^\prime$, $\tau_1^\prime$, \ldots, $\tau_n^\prime$ in $\sS$ such that
$\sigma_i = v* \sigma_i^\prime$ and $\tau_i = v* \tau_i^\prime$. It is easily seen that $\sigma_{i+1} \prec \tau_{i}$ implies
that $\sigma_{i+1}^\prime \prec \tau^\prime_{i}$. On the other hand, $\tau_{i} = \ma (\sigma_i)$ means that there must exist a vertex
$w_i$ and a simplex $\gamma_i^\prime \in A' \subset \sS'(w_i)$ such that $\sigma_i = w_i * \gamma_i^\prime$ and $\tau_i = w_i * \ma'(\gamma_i^\prime)$.
Using the same arguments as in the proof of Lemma~\ref{lem:partition}, we conclude that we must have $v = w_i$ and therefore
$\gamma_i^\prime = \sigma_i^\prime$ and $\ma'(\gamma_i^\prime) = \tau_i^\prime$. This shows also that $\sigma_i^\prime$ and
$\tau_i^\prime$ have to be in $\sS'(v)$. We can see now that we have a directed loop
\[
\ell': \sigma_0^\prime \xrightarrow{\ma'}  \tau_0^\prime \xrightarrow{\succ} \sigma_1^\prime \xrightarrow{\ma'} \ldots \xrightarrow{\ma'}
\tau_n^\prime \xrightarrow{\succ} \sigma_0^\prime
\]
in the modified Hasse diagram of $\sS'(v)$ with simplices of dimensions $n-2$ and $n-1$, which violates the induction
hypothesis.

Let now $\sigma$ be a simplex of $(\sS, \kappa)$ such that $\sigma \in \sS^\alpha$.
By definition of $\ma$, there exist a vertex $v$ and simplices $\sigma^\prime, \tau^\prime \in \sS^\prime (v)$ such that
$\sigma = v* \sigma^\prime$ and $\ma (\sigma) = v* \tau^\prime$. By definition of lower link and $\sS^\alpha$, it follows that for every vertex
$w$ in $\sigma^\prime$ or $\tau^\prime$, $f(w) \precneqq f(v) \preceq \alpha$. Hence, $\ma (\sigma) \in \sS^\alpha$.
\end{proof}

\begin{rem}\label{rem-weak-ls}{\em
A variation of partial matching may be obtained by replacing the lower link ${\sS}'(v)$ in formula~(\ref{lower-link}) with the {\em weak lower link} defined by
\[
{\sS}''(v)=\{\tau\in {\sS}\mid v*\tau\in {\sS} \wedge \, \mbox{$\forall$ vertex $w\in \tau$,  $f(w)\preceq f(v)$}\},
\]
and analogously replacing $\precneqq$ by $\preceq$ in the definition of $\sD'$. The condition $v*\tau\in {\sS}$ implies that $v$ is not in  its weak lower link.
The injectivity of $I$ and the instruction 2(b)-v of the algorithm permit carrying on the proofs. We considered this version of the algorithm with the hope of
matching more cells, however our experiments did not show a significant improvement in terms of getting a more accurate set $\sC$.
}\end{rem}

\subsection{Complexity Analysis}\label{complexity}
We first describe the computational complexity of Algortihm~\ref{alg:matching}.
Let $d$ be the dimension of the complex $\sS$. For each $\sigma \in \sS$, we define $\deg (\sigma)$ to be the cardinality
of the set of all cofaces of $\sigma$.

We recall that $N$ is defined to be the cardinality of $\sS_0$, i.e. the number of vertices in $\sS$. For a vertex $v \in \sS_0$,
its lower link  $\sS'(v)$, which is a subcomplex of $\sS$, consists of at most $\deg (v)$ simplices of dimensions smaller or equal
to $d-1$. It follows that $\sS'(v)$ has at most $\deg (v) d$ vertices. If we assume the worst case scenario where every
vertex has a nonempty lower link, the first call to function Partition will result in $N$ subsequent calls for
Partition, each for a fixed vertex $v \in \sS_0$, with arguments $\sS'(v)$ and the restrictions
of $f$ and $I$ to $\sS'(v)$. Since $\deg (v)$ varies for each vertex $v$, it is difficult to establish any complexity bounds for the algorithm
without assuming some constraints on $\deg (v)$. We will assume hereafter that $\deg (v)$ is bounded above by a constant $\gamma$ for every
$v \in \sS_0$. This is a reasonable assumption when dealing with complexes of manifolds and approximating surface boundaries of objects.
For each vertex $v \in \sS_0$, we need to examine its set of cofaces (read directly from the structure storing the complex)
to create its lower link which can be done first in at most $\gamma$ steps. The partition of the subcomplex $\sS'(v)$ (resulting from
the recursive call to Partition) will be visited once (in at most $\gamma$ steps) to execute the steps (b)-vi to (b)-viii of the algorithm.
We assume that the vertices of $\sS$ are already ordered with respect to the indexing function. It is easily seen that any subsequent call to
\textsf{Partition} with a complex formed by a lower link of some vertex and of dimension $s <d$ is completed in a number of operations
directly proportional to the number of simplices and vertices in the complex which are bounded by $\gamma$ and $\gamma (s+1)$ respectively.
This number will be denoted by $\alpha (\gamma, s)$.

\begin{thm}\label{th:complexity}
Algorithm~\ref{alg:matching} produces a partial matching $(\sA,\sB,\sC,\ma)$ in less than
$2 \gamma^{d} (d+1)! N$ steps.
\end{thm}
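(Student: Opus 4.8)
The plan is to analyze the recursion tree of \textsf{Partition} and bound both the branching and the work per node. The top-level call iterates over all $N$ vertices of $\sS_0$; for each such vertex $v$ with nonempty lower link, it makes one recursive call on $\sS'(v)$. The key structural observation, already recorded in the preamble to the theorem, is that $\sS'(v)$ has at most $\gamma$ simplices and at most $\gamma d$ vertices, and more generally any lower link appearing at recursion depth $j$ sits inside a complex of dimension at most $d-j$, so it has at most $\gamma$ simplices and at most $\gamma(d-j+1)$ vertices. Since the dimension strictly drops with each level of recursion, the recursion depth is at most $d+1$ (a complex of dimension $0$ triggers no further recursive calls of positive-dimensional type).

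First I would set up the recurrence. Let $T(m, s)$ denote the number of steps taken by \textsf{Partition} on a complex with $m$ vertices of dimension $\le s$. At the top level $m = N$, $s = d$. For each of the $m$ vertices we: compute the lower link (at most $\gamma$ steps, reading cofaces), recurse on it, and then run steps (b)-iv through (b)-viii, each of which scans the output partition of the lower link once, costing at most a constant times $\gamma$ — call this bundled per-vertex non-recursive cost $\alpha(\gamma, s)$, which is $O(\gamma)$ uniformly. Each lower link has at most $\gamma d$ vertices and dimension at most $s-1$, so
\[
T(m, s) \le m\bigl(\alpha(\gamma,s) + T(\gamma d, s-1)\bigr),
\]
with base case $T(m, 0) \le \alpha(\gamma, 0)\, m$ (a $0$-dimensional complex: each vertex is just placed in $\sC$, no recursion).

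Next I would unroll this recurrence. Iterating, $T(N, d) \le N \cdot \alpha(\gamma, d) + N \cdot T(\gamma d, d-1)$, and then $T(\gamma d, d-1) \le \gamma d\,\alpha(\gamma,d-1) + \gamma d\, T(\gamma(d-1), d-2)$, and so on; the $j$-th level contributes a product of the vertex-counts $N \cdot (\gamma d)(\gamma(d-1))\cdots$ down to level $j$, times the per-node cost $\alpha$. Bounding every vertex-count crudely by $\gamma(d+1)$ and every $\alpha(\gamma, s)$ by a constant multiple of $\gamma$, and noting there are at most $d+1$ levels, the dominant term is of the form $N \cdot [\gamma(d+1)]^{d} \cdot (\text{const}\cdot\gamma)$ summed over levels, which collapses after absorbing the factorial coming from the product $d\cdot(d-1)\cdots 1$ into the constant $2$. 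The precise bookkeeping gives $T(N, d) < 2\gamma^{d}(d+1)!\,N$.

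The main obstacle is purely the bookkeeping of the constants: one must verify that the per-vertex non-recursive work is genuinely $O(\gamma)$ (in particular that step (b)-iv, finding the $\precneqq$-minimal elements of $\sC_0'$, and step (b)-viii, scanning $\sA'$, do not hide a hidden factor, which they do not since $|\sS'(v)| \le \gamma$ and the vertices are pre-sorted by $I$), and then that the summation $\sum_{j=0}^{d} N [\gamma(d+1)]^{j} \cdot c\gamma$ together with the product of descending vertex counts telescopes cleanly under the stated constant $2$. I do not expect any conceptual difficulty beyond this — the strict dimension drop guarantees termination and caps the depth, and the uniform bound $\gamma$ on coface counts is exactly what decouples the recursion from the global size of $\sS$. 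It is worth remarking that this same estimate shows the overall running time is \emph{linear} in $N$ once $d$ and $\gamma$ are regarded as fixed, which is the qualitative claim the authors emphasize.
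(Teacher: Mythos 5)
Your proposal takes essentially the same route as the paper: the paper defines $\alpha(\gamma,s)$ as the total cost of \textsf{Partition} on a dimension-$s$ lower link, derives the recurrence $\alpha(\gamma,s)\le\gamma(s+1)\bigl(2\gamma+\alpha(\gamma,s-1)\bigr)$ with $\alpha(\gamma,0)\le\gamma$, and unrolls it into exactly the descending product $N\,\gamma^{j}\,d!/(d-j)!$ of vertex counts that you describe, bounding the resulting $(d+1)$-term sum by $2\gamma^{d}(d+1)!\,N$. The only bookkeeping point to watch is the bottom level of the recursion: the per-vertex cost in a $0$-dimensional lower link must be charged at $O(1)$ (as your parenthetical base case correctly has it, and as the paper's $\alpha(\gamma,0)\le\gamma$ for the whole $\le\gamma$-vertex complex encodes), since charging it at $O(\gamma)$ per vertex would inflate the leading term to order $\gamma^{d+1}d!\,N$ and miss the stated constant.
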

\begin{proof}
From the discussion above, we deduce that the processing of each vertex of $\sS$ is completed in less than
$2\gamma + \alpha (\gamma, d-1)$ operations. Therefore, the number of operations for processing all the vertices (call it $\eta$) is bounded above by
$$N(2\gamma + \alpha (\gamma, d-1)).$$
Reasoning by induction and using the arguments above, each subsequent call to \textsf{Partition} on a complex of dimension $s$ of
a lower link of a vertex costs less than $\alpha (\gamma, s)$ and we have
$$\alpha (\gamma, s) \leq \gamma (s+1) \left(2\gamma + \alpha (\gamma, s-1)\right).$$
Moreover, when the complex consists only of vertices, each of them will have an empty lower link in that complex. Thus, we can
conclude that $\alpha (\gamma,0) \leq \gamma$. Putting all together, we can conclude now that
\begin{eqnarray}
\nonumber
\eta & \leq & N \left[2\gamma + \alpha (\gamma, d-1)\right] \leq N \left[2\gamma + \gamma d \left[ 2 \gamma + \alpha (\gamma, d-2)\right]\right]\\
\nonumber
& \leq & N \left[2\gamma + 2 \gamma^2 d + \gamma d \alpha (\gamma, d-2)\right].
\end{eqnarray}
By induction, we can prove that
\begin{eqnarray}
\nonumber
\eta & \leq & N \left[2\gamma + 2 \gamma^2 d + \ldots + 2\gamma^{(d-1)} d(d-1) \ldots 2 + \gamma^{(d-1)} d(d-1) \ldots 2 \alpha (\gamma, 0)\right]\\
\nonumber
& \leq & 2 \gamma^{d} (d+1)! N.
\end{eqnarray}

\end{proof}

Let $n$ denote the total number of cells in the original complex $\sS$. The computation of the rank invariant
of a $d$-dimensional multi-filtration of the complex $\sS$ may be achieved with an algorithm that runs in
$O(n^{2d+3})$ operations (see \cite{CaSiZo09} for more details). Our method which consists of using the acyclic
matchings to perform homology preserving reductions on the original complex, allows to postpone the persistent
homology computation until the complex is reduced to a smaller one which may yield a tremendous gain in the
number of operations incurred. Let $m$ denote the number of cells in the final complex after all reductions yielded by the
acyclic matching are performed. Thus, the computational cost of the multidimensional persistent homology of the
complex $\sS$ is reduced to $O(m^{2d+3})$. To illustrate the significance of our method, let us assume that
our matching algorithm allows to reduce the complex by half its number of cells (a ratio that is comparable to the ones
provided in our experimental results). In this case, the persistent homology computational cost is reduced by a factor
of $2^{2d+3}$ (slightly greater than $500$ if $d=3$) when the computation is performed on the reduced complex. This is a major gain
when comparing the computationally inexpensive reduction with the time consuming persistent homology computation.

Indeed, if we assume that we work under the constraint that $\deg (\sigma) \leq \gamma$ for every
$\sigma \in \sS$, we can easily prove that each elementary reduction is achieved in constant time. Hence, the time complexity
of the total reduction process which runs through all the matching pairs $\{\ma(\sigma),\sigma\}$ and performs the reductions
is in the worst case linear in the number of cells of the complex.

Our aim is to make $m$ very small compared to $n$, or equivalently construct an optimal acyclic matching. However, this problem is
known to be NP hard~\cite{JosPfet06} and there are no known procedure to minimize $m$ for arbitrary complexes. In our context where 
we are dealing with a multidimensional function and using an algorithm based on exploring lower links of vertices, it is possible 
to reach an outcome where no reduction is possible and every cell is critical. This point is illustrated in Figure~\ref{cancellation}(a).

Since our work is inspired by the work in~\cite{KinKnuMra05}, it is natural to raise the question of whether it is
possible to add some cancelling step to reduce further the number of critical cells and allow a bigger number of reductions before
proceeding with the persistent homology computation. Since our algorithm produces an acyclic matching of the complex, it is possible to build 
gradient paths and do cancellations when possible as defined in~\cite{For98}. However, the cancellation of critical cells is not necessarily 
desirable in this context because it works against providing a full account of the history of births and deaths of homology generators which is necessary
for obtaining complete information about the persistent homology. This latter point can be easily illustrated with a simple example
as shown in Figure~\ref{cancellation}(b).

\begin{figure}
\begin{center}
\begin{tabular}{cc}
\includegraphics[height=3.5cm]{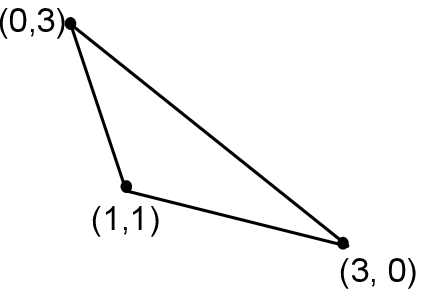}&
\includegraphics[height=3.5cm]{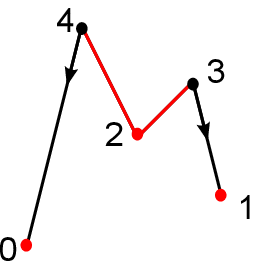}\\
(a) & (b)
\end{tabular}
\caption{(a) Example of a one dimensional complex with a 2-dimensional map on its vertices and in which every cell is critical according to Algorithm~\ref{alg:matching}. 
(b) Example of a one dimensional complex with a one dimensional map on its vertices that can
be extended to a Discrete Morse Function by assigning to each one dimensional cell the maximum of the values of its vertices.
We can easily see that any cancellation of cells would lead to a change in persistence homology of the complex.
}
\label{cancellation}
\end{center}
\end{figure}

\subsection{Back to Reductions}\label{sec:main-reductions}

In this section, we prove that an acyclic matching on an $\sS$-complex $(\sS, \kappa)$ allows by means of reductions
to replace the initial complex by a smaller one with the same persistent homology. The motivation for this approach
stems from the need to achieve a low computational cost in the persistence homology computation.

In the sequel, we assume that $(\sA,\sB,\sC,\ma)$ is an acyclic matching on a filtered $\sS$-complex $\sS$ with the property:

\begin{equation}\label{matching-filtration}
\mbox{If } \sigma\in\sS^\alpha \mbox{ then } \ma(\sigma)\in \sS^\alpha.
\end{equation}

Theorem~\ref{th:alg} asserts that the matching produced by Algorithm~\ref{alg:matching} on a filtered simplicial complex $\sS$ has this property.

\begin{prop}\label{prop:chain-maps-include}
Let $\sigma\in \sA$ and let $(\overline{\sS}, \overline{\kappa})$ be obtained from $(\sS,\kappa)$ by reduction of the pair $(\ma(\sigma),\sigma)$.
Let $\pi$, $\iota$, and $D$ be maps defined by formulas (\ref{eq:pi}), (\ref{eq:iota}), and (\ref{eq:chain-homotopy}) respectively.
Then $\pi(C_*(\sS^\alpha))\subseteq C_*(\overline{\sS}^\alpha)$, $\iota(C_*(\overline{\sS}^\alpha))\subseteq C_*(\sS^\alpha)$,
and $D_q(C_q(\sS^\alpha))\subseteq C_{q+1}(\overline{\sS}^\alpha)$, for each $q\in \Z$.
\end{prop}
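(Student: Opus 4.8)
The plan is to verify the three containments by unwinding the definitions of the maps $\pi$, $\iota$, $D$ on generators, and checking that each output cell actually lies in $\overline{\sS}^\alpha$ (equivalently, by Definition~\ref{def:filtration-reduced}, in $\sS^\alpha$) whenever the input generator lies in the corresponding subcomplex. The key structural fact driving all three is property~(\ref{matching-filtration}): since $\sigma\in\sA$, if $\sigma\in\sS^\alpha$ then $\ma(\sigma)\in\sS^\alpha$; combined with the fact that a multifiltration is non-increasing on faces (property (b) of a multi-filtration), this will handle every term that appears.

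First I would treat $\iota$. Fix a generator $\tau\in\overline{\sS}^\alpha$. By formula~(\ref{eq:iota}), $\iota(\tau)=\tau-\frac{\kappa(\tau,\sigma)}{\kappa(\ma(\sigma),\sigma)}\ma(\sigma)$. The term $\tau$ lies in $C_*(\sS^\alpha)$ since $\tau\in\overline{\sS}^\alpha$ means $\tau\in\sS^\alpha$. For the second term, if $\kappa(\tau,\sigma)=0$ there is nothing to check; if $\kappa(\tau,\sigma)\ne 0$ then $\sigma$ is a primary face of $\tau$, so $\sigma$ is a face of $\tau\in\sS^\alpha$, whence $\sigma\in\sS^\alpha$ by non-increasingness on faces, and then $\ma(\sigma)\in\sS^\alpha$ by~(\ref{matching-filtration}). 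Thus $\iota(\tau)\in C_*(\sS^\alpha)$.

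Next, $D$: by~(\ref{eq:chain-homotopy}), $D_q$ vanishes on all generators except $\sigma$, and $D_q(\sigma)=\frac{1}{\kappa(\ma(\sigma),\sigma)}\ma(\sigma)$. So I only need to check the case where $\sigma\in C_q(\sS^\alpha)$, i.e. $\sigma\in\sS^\alpha$; then~(\ref{matching-filtration}) gives $\ma(\sigma)\in\sS^\alpha$, hence $D_q(\sigma)\in C_{q+1}(\overline{\sS}^\alpha)$ provided $\ma(\sigma)\in\overline{\sS}$ — but here I must be slightly careful, since $\ma(\sigma)\notin\overline{\sS}$ by construction. The point is that $D$ is a map into $C_{*+1}(\sS)$, not $C_{*+1}(\overline{\sS})$, so the claimed containment $D_q(C_q(\sS^\alpha))\subseteq C_{q+1}(\overline{\sS}^\alpha)$ should read $\subseteq C_{q+1}(\sS^\alpha)$; I would state it that way (or note the typo) and the argument above closes it. Finally, for $\pi$: a generator $\tau\in\sS^\alpha$ is sent by~(\ref{eq:pi}) either to $0$, to $\tau$ itself (which lies in $\overline{\sS}^\alpha$ as soon as $\tau\ne\ma(\sigma),\sigma$, and $\tau\ne\ma(\sigma)$ handles the $0$ case, while $\tau=\sigma$ is the remaining case), or, when $\tau=\sigma$, to $-\sum_{\xi\in\overline{\sS}}\frac{\kappa(\ma(\sigma),\xi)}{\kappa(\ma(\sigma),\sigma)}\xi$. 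For this last sum, a surviving term has $\kappa(\ma(\sigma),\xi)\ne 0$, so $\xi$ is a primary face of $\ma(\sigma)$; since $\sigma\in\sS^\alpha$ forces $\ma(\sigma)\in\sS^\alpha$ by~(\ref{matching-filtration}), and $\xi$ is a face of $\ma(\sigma)$, non-increasingness on faces gives $\xi\in\sS^\alpha$, hence $\xi\in\overline{\sS}^\alpha$.

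The main obstacle here is essentially bookkeeping rather than depth: one must consistently pass back and forth between ``$\xi\in\overline{\sS}^\alpha$'' and ``$\xi\in\sS^\alpha$'' via Definition~\ref{def:filtration-reduced}, keep track of which of $\sigma,\ma(\sigma)$ has been deleted in forming $\overline{\sS}$, and invoke the primary-face-implies-face reasoning together with property (b) of multi-filtrations at exactly the right spots. No use of acyclicity is needed for this proposition; only~(\ref{matching-filtration}) and the face-closure of the filtration are used.
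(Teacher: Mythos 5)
Your proof is correct and follows essentially the same route as the paper's: the only nontrivial cases are $\pi(\sigma)$, the $\ma(\sigma)$-term of $\iota(\tau)$, and $D(\sigma)$, each handled by combining face-closure of the filtration with property~(\ref{matching-filtration}). Your observation that the $D$-containment should land in $C_{q+1}(\sS^\alpha)$ rather than $C_{q+1}(\overline{\sS}^\alpha)$ (since $\ma(\sigma)\notin\overline{\sS}$) is a genuine typo in the statement that the paper's own proof glosses over with ``instantly follows by the same argument,'' and you are right to flag it.
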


\begin{proof}
Let $\tau\in \sS^\alpha$. We need to show that $\pi(\tau)\in C_*(\overline{\sS}^\alpha)$. By definition of $\pi$, the only non trivial case is
when $\tau=\sigma$. In this case, $\sigma\in \sS^\alpha$ and by (\ref{matching-filtration}),  $\ma(\sigma)\in \sS^\alpha$.  Note that the chain
$\pi(\sigma)$ is supported in the union of cells $\xi\in\overline{\sS}$ such that $\kappa(\ma(\sigma),\xi)\ne 0$. Each such $\xi$ is a face of
$\ma(\sigma)\in \sS^\alpha$, hence $\xi\in \overline{\sS}^\alpha$.

Let now $\tau \in \overline{\sS}^\alpha$. We need to show that $\iota(\tau)\in C_*(\sS^\alpha)$. By definition of $\iota$,
the only non trivial case is when $\kappa(\tau,\sigma)\ne 0$.  This implies that $\sigma$ is a face of $\tau$. Let
$\tau\in \overline{\sS}^\alpha$. By Definition~\ref{def:filtration-reduced}, this means that $\tau\in \sS^\alpha$.
By definition of filtration, it follows that $\sigma\in \sS^\alpha$. Again, by (\ref{matching-filtration}), $\ma(\sigma)\in \sS^\alpha$,
proving the claim.

The statement on $D_*$ instantly follows by the same argument.
\end{proof}

\begin{lem}\label{lem:chain-equivalence}
The maps $\pi_{|C_*(\sS^\alpha)}:C_*(\sS^\alpha)\to C_*(\overline{\sS}^\alpha)$ and
$\iota_{|C_*(\overline{\sS}^\alpha)}:C_*(\overline{\sS}^\alpha)\to C_*(\sS^\alpha)$ defined by restriction are chain homotopy equivalences.
Moreover, the diagram
\[
\begin{array}{ccc}
H_*(\sS^\alpha) & \mapright{H_*(j^{(\alpha,\beta)})} & H_*(\sS^\beta) \\
\mapdown{\cong} & & \mapdown{\cong} \\
H_*(\overline{\sS}^\alpha) & \mapright{H_*(j^{(\alpha,\beta)})} & H_*(\overline{\sS}^\beta)
\end{array}
\]
commutes.
\end{lem}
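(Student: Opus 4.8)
The plan is to leverage the already-established reduction theory for the full complex (from \cite{KaMrSl98}) together with Proposition~\ref{prop:chain-maps-include}. First I would recall that $\pi$ and $\iota$ are chain homotopy equivalences between $C_*(\sS)$ and $C_*(\overline{\sS})$, with $\pi\iota = \id$ on $C_*(\overline{\sS})$ and $\iota\pi - \id = \partial D + D\partial$ on $C_*(\sS)$, all witnessed by the explicit formulas (\ref{eq:pi}), (\ref{eq:iota}), (\ref{eq:chain-homotopy}). By Proposition~\ref{prop:chain-maps-include}, each of $\pi$, $\iota$, $D$ restricts to the subcomplexes associated with a fixed $\alpha\in\R^k$: $\pi(C_*(\sS^\alpha))\subseteq C_*(\overline{\sS}^\alpha)$, $\iota(C_*(\overline{\sS}^\alpha))\subseteq C_*(\sS^\alpha)$, and $D_q(C_q(\sS^\alpha))\subseteq C_{q+1}(\overline{\sS}^\alpha)$. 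Since the identities $\pi\iota=\id$ and $\iota\pi-\id=\partial D+D\partial$ are identities of maps, they remain valid after restriction to $C_*(\sS^\alpha)$ and $C_*(\overline{\sS}^\alpha)$; hence the restricted maps $\pi_{|C_*(\sS^\alpha)}$ and $\iota_{|C_*(\overline{\sS}^\alpha)}$ are mutually inverse chain homotopy equivalences, which gives the first claim and the vertical isomorphisms $H_*(\sS^\alpha)\cong H_*(\overline{\sS}^\alpha)$.

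For the commutativity of the square, the key point is that $\pi$ is natural with respect to the inclusions $j^{(\alpha,\beta)}$. Concretely, for $\alpha\preceq\beta$ one has inclusion-induced chain maps $\sS^\alpha\hookrightarrow\sS^\beta$ and $\overline{\sS}^\alpha\hookrightarrow\overline{\sS}^\beta$, and I would check that the diagram of chain maps
\[
\begin{array}{ccc}
C_*(\sS^\alpha) & \hookrightarrow & C_*(\sS^\beta) \\
\mapdown{\pi} & & \mapdown{\pi} \\
C_*(\overline{\sS}^\alpha) & \hookrightarrow & C_*(\overline{\sS}^\beta)
\end{array}
\]
commutes on the nose. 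This is immediate on generators: for $\tau\neq\sigma$, $\pi(\tau)=\tau$ (or $0$ if $\tau=\ma(\sigma)$) regardless of which subcomplex we view $\tau$ in, and for $\tau=\sigma$ the formula $\pi(\sigma)=-\sum_{\xi\in\overline{\sS}}\frac{\kappa(\ma(\sigma),\xi)}{\kappa(\ma(\sigma),\sigma)}\xi$ does not depend on $\alpha$ either; the inclusions are literally the identity on the spanning cells that belong to the smaller complex. Passing to homology, $H_*(\pi)$ intertwines $H_*(j^{(\alpha,\beta)})$ on $\sS$ with $H_*(j^{(\alpha,\beta)})$ on $\overline{\sS}$, and since $H_*(\pi)$ is the isomorphism appearing in the vertical arrows, the square in the statement commutes.

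The only genuinely delicate step is verifying that the restrictions of the chain-homotopy identities really do hold verbatim on the filtered pieces — i.e. that no term of $\partial D + D\partial$ or of $\iota\pi$ produces a cell outside $\sS^\alpha$ when applied to a cell of $\sS^\alpha$. But this is exactly the content of Proposition~\ref{prop:chain-maps-include} combined with the fact that $\partial$ preserves each $C_*(\sS^\alpha)$ (property~(b) of a multifiltration: faces of a cell in $\sS^\alpha$ lie in $\sS^\alpha$), so I expect this to be a short argument citing that proposition. I would therefore present the proof as: (1) restrict the known global identities, invoking Proposition~\ref{prop:chain-maps-include} to see that all maps land where required; (2) conclude the restricted maps are chain homotopy equivalences; (3) observe the naturality square of $\pi$ over $j^{(\alpha,\beta)}$ commutes at the chain level by inspection of the generator formulas; (4) apply $H_*$ to obtain the commuting square with vertical isomorphisms.
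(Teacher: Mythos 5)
Your proposal is correct and follows essentially the same route as the paper: restrict $\pi$, $\iota$, $D$ via Proposition~\ref{prop:chain-maps-include}, note that the chain-homotopy identities and the naturality square over the inclusions $j^{(\alpha,\beta)}$ hold verbatim at the chain level, and conclude by functoriality of homology. The paper's proof is just a terser version of steps (1)--(4) of your plan.
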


\begin{proof}
By Proposition~\ref{prop:chain-maps-include}, we have the commutative diagram
\[
\begin{array}{ccc}
C_*(\sS^\alpha) & \hookrightarrow & C_*(\sS^\beta) \\
\mapdown{\pi_{|C_*(\sS^\alpha)}} & & \mapdown{\pi_{|C_*(\sS^\beta)}} \\
C_*(\overline{\sS}^\alpha) & \hookrightarrow & C_*(\overline{\sS}^\beta)
\end{array}
\]
where the vertical arrows are chain equivalences. The result follows by the functoriality of homology.
\end{proof}

This lemma immediately yields the following result.

\begin{thm}\label{th:reduced-filtration-iso}
For every $\alpha\preceq \beta\in \R^k$, $H_*^{\alpha,\beta}(\sS)$ is isomorphic to $H_*^{\alpha,\beta}(\overline{\sS})$.
\end{thm}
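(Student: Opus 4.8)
The statement is designed to drop out of Lemma~\ref{lem:chain-equivalence}; the only things to supply are (i) the passage from the chain--homotopy equivalence of Lemma~\ref{lem:chain-equivalence} to an isomorphism of the \emph{images} that define the persistent homology groups, and (ii), if $\overline{\sS}$ is to denote the complex reduced by \emph{all} pairs of the matching rather than by a single pair, an induction over the pairs of $\ma$. I would carry out (i) first.

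For (i), fix $q$ and $\alpha\preceq\beta$, and for a single reduction of a pair $(\ma(\sigma),\sigma)$ abbreviate $a=H_q(j^{(\alpha,\beta)})\colon H_q(\sS^\alpha)\to H_q(\sS^\beta)$ and $\bar a=H_q(j^{(\alpha,\beta)})\colon H_q(\overline{\sS}^\alpha)\to H_q(\overline{\sS}^\beta)$, and let $p_\alpha,p_\beta$ be the vertical isomorphisms from Lemma~\ref{lem:chain-equivalence}. Commutativity of the square gives $p_\beta\circ a=\bar a\circ p_\alpha$, and since $p_\alpha$ is surjective, $p_\beta(\image a)=\image(p_\beta\circ a)=\image(\bar a\circ p_\alpha)=\bar a(H_q(\overline{\sS}^\alpha))=\image\bar a$. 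As $p_\beta$ is injective, its restriction is an isomorphism $H_q^{\alpha,\beta}(\sS)=\image a\;\xrightarrow{\;\cong\;}\;\image\bar a=H_q^{\alpha,\beta}(\overline{\sS})$, which is the claim for one reduction.

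For (ii), I would induct on $|\sA|$. If $|\sA|=0$ then $\overline{\sS}=\sS$ and there is nothing to prove. Otherwise pick $\sigma\in\sA$, let $(\sS_1,\kappa_1)$ be obtained from $(\sS,\kappa)$ by reducing $(\ma(\sigma),\sigma)$ with the induced filtration of Definition~\ref{def:filtration-reduced}, and observe that $(\overline{\sC},\overline{\ma}\colon\overline{\sA}\to\overline{\sB})$ is, by Corollary~\ref{cor:reduced-matching}, an acyclic partial matching on $(\sS_1,\kappa_1)$ with $|\overline{\sA}|=|\sA|-1$. It still satisfies property~(\ref{matching-filtration}): if $\tau\in\sS_1^\alpha$, that is $\tau\in\sS^\alpha$ with $\tau\ne\sigma$, then $\ma(\tau)\in\sS^\alpha$ by~(\ref{matching-filtration}) for the original matching, while $\ma(\tau)\notin\{\sigma,\ma(\sigma)\}$ because $\ma$ is injective and $\sA\cap\sB=\emptyset$; hence $\overline{\ma}(\tau)=\ma(\tau)\in\sS_1^\alpha$. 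Step (i) applied to this reduction gives $H_*^{\alpha,\beta}(\sS)\cong H_*^{\alpha,\beta}(\sS_1)$, and the induction hypothesis applied to $(\sS_1,\kappa_1)$ with its reduced matching gives $H_*^{\alpha,\beta}(\sS_1)\cong H_*^{\alpha,\beta}(\overline{\sS})$, since reducing the remaining pairs of $\overline{\ma}$ from $\sS_1$ yields exactly $\overline{\sS}$ with the filtration $\tau\in\overline{\sS}^\alpha\iff\tau\in\sS^\alpha$. Composing the two isomorphisms closes the induction.

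The only point needing care is bookkeeping, not mathematics: one should either argue that the fully reduced complex is independent of the order in which pairs are reduced, or — more cheaply — just fix one order and run the induction above, which avoids the well-definedness question altogether. Everything substantive is already provided by Lemma~\ref{lem:chain-equivalence} and Corollary~\ref{cor:reduced-matching}, and step (i) is a one-line diagram chase.
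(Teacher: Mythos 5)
Your proposal is correct and matches the paper's route: in the paper $\overline{\sS}$ denotes the single-pair reduction, so the theorem is exactly your step (i) --- the image-comparison diagram chase that the paper dismisses with ``this lemma immediately yields the following result'' after Lemma~\ref{lem:chain-equivalence}. Your step (ii), including the check that property~(\ref{matching-filtration}) survives a reduction, is precisely the induction the paper carries out afterwards to obtain Corollary~\ref{cor:homology-S-iso-C}, so nothing is missing and nothing is genuinely different.
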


Let us order $\sA$ in a sequence
\[
\sA=\{\sA(1),\sA(2),\ldots,\sA(n)\}
\]
and set $\sB(i)=\ma(\sA(i))$, $i=1,2,\ldots,n$. Put $\sS(0)=\sS$ and
\[
\sS(i)=\overline{\sS(i-1)}=\sS(i-1)\setminus \{\sB(i),\sA(i)\},\;\;i=1,2,\ldots,n.
\]
Since a partial matching defines a partition of $\sS$, we have $\sS(n)=\sC$.

Note that, by Definition~\ref{def:filtration-reduced}, the condition (\ref{matching-filtration}) carries through to the reduced complex.
Consequently, Corollary~\ref{cor:reduced-matching}, Lemma~\ref{lem:chain-equivalence} and Theorem~\ref{th:reduced-filtration-iso} extend
by induction to any step of reduction. Hence, for any $\alpha\in \R^k$,  we get a sequence of filtered $\sS$-complexes
\[
(\sS^{\alpha}(0),\kappa^{\alpha}(0)),\; (\sS^{\alpha}(1),\kappa^{\alpha}(1)),\;\ldots,\;(\sS^{\alpha}(n),\kappa^{\alpha}(n)),
\]
where $\kappa^{\alpha}(i)=\overline{\kappa^{\alpha}(i-1)}$, together with a sequence of chain equivalences
\[
\pi^{\alpha}(i): C_*(\sS^{\alpha}(i-1))\to C_*(\sS^{\alpha}(i)),\;\;\iota^{\alpha}(i): C_*(\sS^{\alpha}(i))\to C_*(\sS^{\alpha}(i-1)).
\]
Moreover, for any $\alpha\preceq \beta$, we get the sequence of inclusions
\[
j^{(\alpha,\beta)}(i):\sS^{\alpha}(i)\hookrightarrow \sS^{\beta}(i),
\]
such that the commutative diagram of Lemma~\ref{lem:chain-equivalence} applied to the $i$'th iterate gives the following.
\[
\begin{array}{ccc}
H_*(\sS^\alpha(i-1)) & \mapright{H_*(j^{(\alpha,\beta)}(i-1))} & H_*(\sS^\beta(i-1)) \\
\mapdown{\cong} & & \mapdown{\cong} \\
H_*(\sS^\alpha(i)) & \mapright{H_*(j^{(\alpha,\beta)}(i))} & H_*(\sS^\beta(i))
\end{array}
\]
By induction, we get the the following.
\begin{cor}\label{cor:homology-S-iso-C}
For every $\alpha\preceq \beta\in \R^k$, $H_*^{\alpha,\beta}(\sS)$ is isomorphic to $H_*^{\alpha,\beta}(\sC)$.
\end{cor}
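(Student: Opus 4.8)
The plan is to induct on the number $n=|\sA|$ of elementary reductions, reusing the tower of filtered $\sS$-complexes $\sS(0)=\sS,\sS(1),\dots,\sS(n)=\sC$ and the chain equivalences $\pi^{\alpha}(i),\iota^{\alpha}(i)$ assembled just above. The base case $n=0$ is trivial, since then $\sA=\sB=\emptyset$ and $\sS=\sC$. For the inductive step it suffices to prove that at each stage $i$ one has an isomorphism $H_*^{\alpha,\beta}(\sS(i-1))\cong H_*^{\alpha,\beta}(\sS(i))$, compatibly, and then to compose.

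The single-stage isomorphism is essentially a formal consequence of the commutative square displayed right before the corollary, whose vertical arrows are the isomorphisms supplied by Lemma~\ref{lem:chain-equivalence} applied at the $i$-th iterate. Recall that $H_q^{\alpha,\beta}(\sS(i))$ is by definition the image of $H_q(j^{(\alpha,\beta)}(i))$. In any commutative square of abelian group homomorphisms in which both vertical maps are isomorphisms, the right-hand vertical isomorphism restricts to an isomorphism from the image of the top horizontal map onto the image of the bottom horizontal map; applied here, the vertical isomorphism $H_*(\sS^{\beta}(i-1))\xrightarrow{\cong}H_*(\sS^{\beta}(i))$ carries $H_*^{\alpha,\beta}(\sS(i-1))$ isomorphically onto $H_*^{\alpha,\beta}(\sS(i))$.

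The only point requiring attention is that the induction is legitimately licensed at every stage, i.e.\ that the matching inherited on $\sS(i-1)$ still satisfies the hypotheses under which Theorem~\ref{th:reduced-filtration-iso} and Lemma~\ref{lem:chain-equivalence} were established: it must still be acyclic and still satisfy the filtration-compatibility (\ref{matching-filtration}). Acyclicity of the inherited matching is Corollary~\ref{cor:reduced-matching}. Compatibility is inherited because, by Definition~\ref{def:filtration-reduced}, the induced filtration on a reduced complex satisfies $\tau\in\overline{\sS}^\alpha\iff\tau\in\sS^\alpha$ and the reduced matching map is a restriction of $\ma$; hence ``$\sigma\in\sS(i-1)^\alpha\Rightarrow\ma(\sigma)\in\sS(i-1)^\alpha$'' is literally the original statement for $\sS$. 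Both facts are exactly what the remark preceding the corollary records, so they may be cited rather than reproved.

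Composing the stagewise isomorphisms then gives
\[
H_*^{\alpha,\beta}(\sS)=H_*^{\alpha,\beta}(\sS(0))\cong H_*^{\alpha,\beta}(\sS(1))\cong\cdots\cong H_*^{\alpha,\beta}(\sS(n))=H_*^{\alpha,\beta}(\sC),
\]
where the last equality uses $\sS(n)=\sC$ together with the fact that the filtration obtained on $\sC$ by iterating Definition~\ref{def:filtration-reduced} agrees with the filtration on $\sC$ inherited directly from $\sS$. I do not expect a genuine obstacle: the substantive homological content is already contained in Proposition~\ref{prop:chain-maps-include}, Lemma~\ref{lem:chain-equivalence}, and Theorem~\ref{th:reduced-filtration-iso}, and what remains is the bookkeeping of propagating the two matching hypotheses along the tower, which the cited results handle.
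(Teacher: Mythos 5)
Your proof is correct and follows the same route as the paper: iterate the single-step reduction along the tower $\sS(0),\dots,\sS(n)=\sC$, propagate acyclicity via Corollary~\ref{cor:reduced-matching} and the filtration-compatibility~(\ref{matching-filtration}) via Definition~\ref{def:filtration-reduced}, and compose the stagewise isomorphisms of persistent homology obtained from the commutative square of Lemma~\ref{lem:chain-equivalence}. The only difference is cosmetic: you make explicit the elementary fact that a commutative square with vertical isomorphisms restricts to an isomorphism of images, which the paper leaves implicit.
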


\section{Experimental Results and Conclusion}\label{sec:experiments}

We considered four triangle meshes (available at \cite{sf}).
Each mesh was filtered by  the 2-dimensional measuring function $f$
 taking each vertex $v$ of coordinates $(x, y, z)$ to the pair $f(v) = (|x|, |y|)$.

In Table \ref{tab:space}, the first row shows on the top line the number of vertices in each considered mesh, and in
the middle line same quantities referred to the cell complex $\sC$
obtained by using our matching algorithm to reduce $\sS$. Finally, it also displays in the bottom line the ratio between
the second and the first lines, expressing them in percentage points.
The second and the third rows show similar information for the edges and the faces. Finally, the fourth row show the same
information for the total number of cells of each considered mesh $\sS$.

Our experiments confirm that the current vertex-based matching algorithms do not produce optimal reduction of the complex
so that every remaining cell is relevant in the computation of persistence homology. The discussion and the examples 
provided in subsection~\ref{complexity} show the limitations of this method. 
They show a fair rate of reduction for vertices, but the reduction rate for cells of dimensions 1 and 2 is not as
significant as that for vertices.

\begin{table}
\caption{Reduction performance on some triangle meshes.}
\begin{center}
\begin{tabular}{| c | c | c | c | c |}
\hline
 & $\raisebox{-.9\height}{\includegraphics[height=1.5cm]{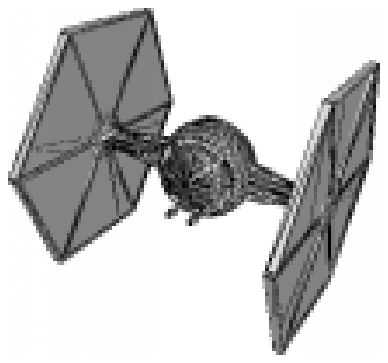}}$& $\raisebox{-.9\height}
 {\includegraphics[height=1.5cm]{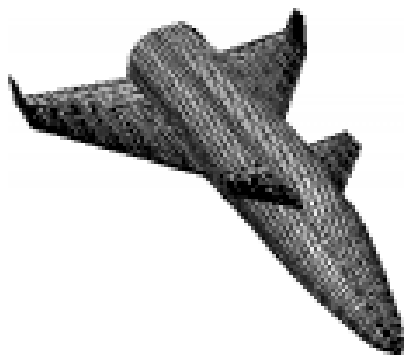} }$&  $\raisebox{-.9\height}
 {\includegraphics[height=1.5cm]{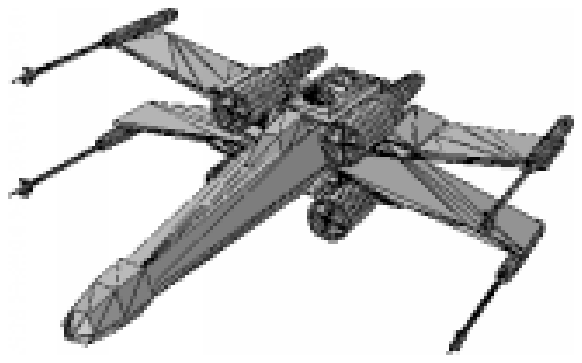}}$&  $\raisebox{-.9\height}
 {\includegraphics[height=1.5cm]{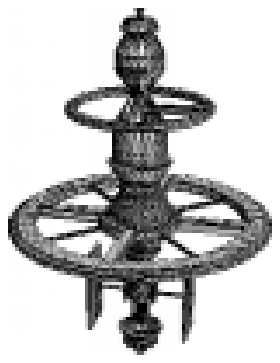}}$\\
 & \tt{tie} & \tt{space\_shuttle} & \tt{x\_wing} & \tt{space\_station}\\
 \hline
$\begin{array}{rrr}\# \sS_0\\ \# \sC_0\\ \% \end{array} $&
$\begin{array}{rrr}2014\\228\\ 11.3 \end{array}$&
$\begin{array}{rrr}2376\\121\\ 5.1 \end{array}$ &
$\begin{array}{rrr}3099\\175\\ 5.6 \end{array}$&
$\begin{array}{rrr}5749\\1879\\ 32.7 \end{array}$ \\
 \hline
 $\begin{array}{rrr}\# \sS_1\\ \# \sC_1\\ \% \end{array}$ &
 $\begin{array}{rrr}5944\\3343\\ 56.2 \end{array}$&
 $\begin{array}{rrr}6330\\3699\\ 58.4 \end{array}$ &
 $\begin{array}{rrr}9190\\3605\\ 39.2 \end{array}$ &
 $\begin{array}{rrr}15949\\11158\\ 70.0 \end{array}$ \\
 \hline
 $\begin{array}{rrr}\# \sS_2\\ \# \sC_2\\ \% \end{array}$ &
 $\begin{array}{rrr}3827\\3012\\ 78.7 \end{array}$ &
 $\begin{array}{rrr}3952\\3576\\ 90.5 \end{array}$&
 $\begin{array}{rrr}6076\\3415\\ 56.2 \end{array}$ &
 $\begin{array}{rrr}10237\\9316\\ 91.0 \end{array}$ \\
 \hline
 $\begin{array}{rrr}\# \sS\\ \# \sC\\ \% \end{array}$ &
 $\begin{array}{rrr}11785\\6583 \\55.9  \end{array}$ &
 $\begin{array}{rrr}12658\\7396 \\58.4  \end{array}$ &
 $\begin{array}{rrr}18365\\7195 \\39.2  \end{array}$ &
 $\begin{array}{rrr}31935\\22353 \\70.0  \end{array}$\\
 \hline
 \end{tabular}
 \end{center}
\label{tab:space}
\end{table}

 \begin{center}%
 {\bfseries Acknowledgments\vspace{-.5em}}%
 \end{center}%
This work was partially supported by the following institutions: INdAM-GNSAGA (C.L.), NSERC Canada Discovery Grant (T.K.).

\bibliographystyle{abbrv}
\bibliography{biblio}

\medskip

\noindent Department of Computer Science\\
Bishop's University\\
Lennoxville (Qu\'ebec),  Canada J1M 1Z7\\
mallili@ubishops.ca
\\~\\
D\'epartement de math\'ematiques\\
Universit\'e de Sherbrooke,\\
Sherbrooke (Qu\'ebec), Canada J1K 2R1\\
t.kaczynski@usherbrooke.ca
\\~\\
\noindent Dipartimento di Scienze e Metodi dell'Ingegneria\\
Universit\`a di Modena e Reggio Emilia\\
Reggio Emilia, Italy
\\
claudia.landi@unimore.it
\end{document}